\DeclareMathAlphabet\mathbfcal{OMS}{cmsy}{b}{n}
\newcommand{\ket}[1]{\ensuremath{|#1\rangle}}
\newcommand{\bra}[1]{\ensuremath{\langle #1|}}
\newcommand{\proj}[1]{\ket{#1}\!\bra{#1}}
\newcommand{\be}{\begin{equation}}
\newcommand{\ee}{\end{equation}}
\newcommand{\ba}{\begin{eqnarray}}
\newcommand{\ea}{\end{eqnarray}}
\newcommand{\norm}[1]{\left\|#1\right\|}
\newcommand{\id}{\mathbb{I}}
\newcommand{\Wdiff}{\Delta}
\newcommand{\WdiffSA}{\overline{\Delta}_\gamma}
\newcommand{\Ydual}{{Y}}
\newcommand{\SRvar}{{\color{black}\mu}}
\newcommand{\C}{\mathcal{C}}
\newtheorem{result}{Result}
\newtheorem{lemma}[result]{Lemma}
\newtheorem{proposition}[result]{Proposition}
\newtheorem{question}{Question}
\newtheorem{fact}[result]{Fact}
\definecolor{nred}{rgb}{0.9,0.1,0.1}
\definecolor{nblack}{rgb}{0,0,0}
\definecolor{nblue}{rgb}{0.2,0.2,0.8}
\definecolor{ngreen}{rgb}{0.2,0.5,0.2}
\definecolor{ublue}{rgb}{0,0,0.5}
\definecolor{pur}{rgb}{0.75,0,0.75}
\definecolor{nngrn}{rgb}{0,0.5,0.5}
\definecolor{CitingColor}{rgb}{0,0.3,1}
\newcommand{\blu}{\color{nblue}}
\newcommand{\CY}[1]{{\color{black}#1}}
\newcommand{\CYnew}[1]{{\color{black}#1}}
\newcommand{\CYtwo}[1]{{\color{black}#1}}
\newcommand{\CYthree}[1]{{\color{black}#1}}
\newcommand{\RFone}[1]{{\color{black}#1}}
\newcommand{\RFtwo}[1]{{\color{black}#1}}
\begin{document}
\title{Thermodynamic Approach to Quantifying Incompatible Instruments}

\author{Chung-Yun Hsieh}
\affiliation{H. H. Wills Physics Laboratory, University of Bristol, Tyndall Avenue, Bristol, BS8 1TL, United Kingdom}

\author{Shin-Liang Chen}
\email{shin.liang.chen@email.nchu.edu.tw}
\affiliation{Department of Physics, National Chung Hsing University, Taichung 40227, Taiwan}
\affiliation{Physics Division, National Center for Theoretical Sciences, Taipei 106319, Taiwan}
\affiliation{\CYtwo{Center for Quantum Frontiers of Research \& Technology (QFort), National Cheng Kung University, Tainan 701, Taiwan}}

\date{\today}

\begin{abstract} 
We consider a thermodynamic framework to quantify instrument incompatibility via a resource theory {\em subject to} thermodynamic constraints. We use the minimal {\em thermalisation time} needed to erase incompatibility's signature to measure incompatibility. Unexpectedly, this time value is equivalent to incompatibility advantage in a work extraction task. Hence, both thermalisation time and extractable work can directly quantify instrument incompatibility. Finally, we show that incompatibility signatures must vanish in non-Markovian thermalisation.
\end{abstract}

\maketitle

Uncertainty principle is one of the most profound aspects of quantum theory~\cite{Busch2014RMP}. It results from the fact that quantum observables, in general, are not commuting, and measuring one physical property can unavoidably forbid us from knowing anything about the other. Namely, there are quantum properties that cannot be {\em jointly measured} via a single quantum device, as termed {\em incompatible}~\cite{Otfried2021Rev}. Crucially, quantum theory's incompatible nature is for more than just measurement devices---two state ensembles can be incompatible~\cite{Hsieh-IP} via the phenomenon called quantum steering~\cite{UolaRMP2020,Cavalcanti2016}, and quantum channels can be incompatible as they are not always simultaneously implementable in broadcast scenarios~\cite{ScaraniRMP2005,Hsieh2022PRR,Haapasalo2021}. It turns out that different types of incompatible quantum devices are potential resources in various operational tasks, such as, but not limited to, one-sided device-independent quantum information tasks (see, e.g., Refs.~\cite{UolaRMP2020,Cavalcanti2016}), state/channel discrimination~\cite{Skrzypczyk2019,Takagi2019,Uola2019PRL,Carmeli2019PRL,Hsieh2022PRR,Designolle2019,Hsieh2023-2}, state/channel exclusion~\cite{Hsieh-IP,Ducuara2020}, quantum programmability~\cite{Buscemi2020PRL,Buscemi2023}, and encryption~\cite{Hsieh-IP}. For a better global view, it is necessary to have a mathematical language unifying different types of incompatibility. This thus initiates the study of quantum instruments and their incompatibility~\cite{Ku2018PRA,Ku2021PRR,Ku2022PRXQ,Mitra2022,Mitra2023,Buscemi2023,Heinosaari2016,Heinosaari2014,Ji2023}.

So far, most discussions have mainly focused on incompatible instruments from the quantum-information perspective. A physically relevant question is how to quantify this quantum feature from a {\em thermodynamic} point of view. For example, thermodynamic approaches to understanding/quantifying information transmission~\cite{Landauer1961,Hsieh2021PRXQ,Hsieh2022}, conditional entropy~\cite{delRio2011}, and quantum correlation~\cite{Oppenheim2002,Perarnau-Llobe2015,JiPRL2022,BeyerPRL2019,ChanPRA2022} have provided novel insights and advanced our understanding of the relation between thermodynamics and quantum information. Also, \CYtwo{quantum theory's} non-commuting signatures have been studied in the context of quantum heat engines~\cite{KosloffPRE2002,FeldmannPRE2006} and conservation laws~\cite{LostaglioNJP2017,Majidy2023,YungerHalpern2016NC,Guryanova2016NC}, showing \CYtwo{incompatibility's} important role in thermodynamics. Nevertheless, a thermodynamic framework for understanding incompatible instruments is still missing in the literature.

This work fills this gap by considering a natural framework for quantifying and characterising incompatible instruments via thermalisation. The idea is to analyse instruments' ability to drive the system out of thermal equilibrium (and generate quantum signatures via incompatibility) and check how long thermalisation \CYtwo{removes} quantum signatures (Fig.~\ref{Fig:Framework}).

\section{Results}
\subsection{Instruments and their incompatibility}
We start with briefly reviewing key ingredients. \RFtwo{To begin with,} {\em channels} are completely-positive trace-preserving linear maps~\cite{QIC-book}. They describe deterministic dynamics of quantum states. When dynamics become stochastic, they are described by {\em filters}, i.e., completely-positive trace-non-increasing linear maps. A filter $\mathcal{E}$ describes the following process: With an input state $\rho$, it outputs the state $\mathcal{E}(\rho)/{\rm tr}[\mathcal{E}(\rho)]$ with success probability ${\rm tr}[\mathcal{E}(\rho)]$. Physically, a filter is part of a channel~\footnote{For any filter $\mathcal{E}$, there exists another filter $\mathcal{E}'$ such that $\mathcal{E}+\mathcal{E}'$ is a channel.}. Importantly, filters can describe physical measurements via an {\em instrument}. Formally, an instrument is a set of filters, $\{\mathcal{E}_a\}_a$, such that $\sum_a\mathcal{E}_a$ is a channel. $\sum_a\mathcal{E}_a$ is called the {\em average channel} of this instrument. It models a general measurement process: Each index $a$ denotes a measurement outcome. For an input state $\rho$, ${\rm tr}[\mathcal{E}_a(\rho)]$ is the probability of getting outcome $a$ with the post-measurement state $\mathcal{E}_a(\rho)/{\rm tr}[\mathcal{E}_a(\rho)]$. This generalises positive operator-valued measures~\cite{QIC-book} (which only addresses measurement statistics).

Now, suppose we have several measurements, indexed by $x$, each is described by an instrument $\{\mathcal{E}_{a|x}\}_a$. Each $a$ represents a measurement outcome; i.e., \CYtwo{``$a|x$''} means that it is the $a$-th outcome of the $x$-th instrument. Collectively, we write $\mathbfcal{E}\coloneqq\{\mathcal{E}_{a|x}\}_{a,x}$ \CYtwo{as} a collection of instruments. A fundamental question is: {\em Can these instruments be jointly implemented?} Namely, \CYtwo{we ask whether there is} a single instrument $\{\mathcal{G}_\lambda\}_{\lambda}$ (\CYtwo{with measurement outcomes denoted} by $\lambda$'s) and \CYtwo{conditional probabilities} $\{P(a|x,\lambda)\}_{a,x,\lambda}$~\footnote{\CYtwo{$\sum_aP(a|x,\lambda)=1$ $\forall\,x,\lambda$; $P(a|x,\lambda)\ge0$ $\forall\,a,x,\lambda$.}} \CYtwo{achieving}
\begin{align}\label{Eq:instrument-JM}
\RFone{\mathcal{E}_{a|x}=\sum_\lambda P(a|x,\lambda) \mathcal{G}_{\lambda}\quad\forall\;a,x.}
\end{align}
If so, we say $\mathbfcal{E}$ is {\em compatible}~\cite{Mitra2022,Mitra2023,Buscemi2023,Heinosaari2016,Heinosaari2014,Ji2023}. Physically, \CYtwo{this means} a single device $\{\mathcal{G}_\lambda\}_\lambda$ plus classical post-measurement processing can reproduce \CYtwo{all $\mathbfcal{E}$'s outcomes}. When this expression is impossible, $\mathbfcal{E}$ \CYtwo{is called} {\em incompatible}. \RFtwo{A compatible $\mathbfcal{E}$ must satisfy $\sum_a\mathcal{E}_{a|x} = \sum_\lambda\mathcal{G}_\lambda$ $\forall\,x$---all its instruments have the same average channel $\sum_\lambda\mathcal{G}_\lambda$}. Hence, if $\mathbfcal{E}$ assigns different average channels to different $x$ values, $\mathbfcal{E}$ is trivially incompatible. To avoid \CYtwo{triviality}, we always assume \RFtwo{$\sum_a\mathcal{E}_{a|x} = \sum_b\mathcal{E}_{b|y}$ $\forall\,x,y$;} \RFone{i.e., all $\mathbfcal{E}$'s instruments have the same average channel}.

\subsection{Quantum steering in a temporal scenario}
To detail our framework, one more ingredient is needed---{\em temporal steering scenario}~\cite{YNChen2014,CMLi2015,SLChen2016,Uola18} (see also Refs.~\cite{EPR,Schrodinger1935,Schrodinger1936,Wiseman2007PRL,Jones2007PRA,UolaRMP2020,Cavalcanti2016,Karthik2015JOSAB}). It is a multi-round scenario useful for certifying dynamical quantum resources~\cite{Ku2022PRXQ,HorodeckiRMP2003,RossetPRX2018,Vieira2024,Yuan2021npjQI,Abiuso2024,NarasimhacharPRL2019}. To start with, let $\mathbfcal{E}$ be a given collection of instruments. In each round, a state $\rho$ is prepared. With equal probability, an agent $A$ randomly picks an index $x$ and then measures the system via the $x$-th instrument in $\mathbfcal{E}$. Suppose the outcome is $a$. $A$ then sends the classical indices $(a,x)$ to another agent $B$. Also, $A$ uses a channel $\mathcal{N}$ to send the post-measurement state to $B$. After several rounds, $B$ obtains a collection of un-normalised states, ${\bm\sigma}\coloneqq\{\sigma_{a|x}\coloneqq\mathcal{N}\left[\mathcal{E}_{a|x}(\rho)\right]\}_{a,x}$, called a {\em state assemblage}. Physically, conditioned on $x$, ${\rm tr}(\sigma_{a|x})$ is the probability of obtaining outcome $a$ and the state $\sigma_{a|x}/{\rm tr}(\sigma_{a|x})$. \RFtwo{Notably, $\sum_a\sigma_{a|x} = \sum_b\sigma_{b|y}$ $\forall\,x,y$ as all $\mathbfcal{E}$'s instruments have the same average channel}. $\sum_a\sigma_{a|x}$ is called the {\em reduced state} of ${\bm\sigma}$.

\CYthree{Temporal steering scenarios 
can certify whether
$A$ 
influences $B$'s states via quantum measurement processes}. This can be done by checking if ${\bm\sigma}$ admits a description involving {\em no measurement}~\cite{YNChen2014,CMLi2015,SLChen2016}, 
\CYthree{conventionally} called a {\em local hidden-state} (LHS) model. Mathematically, it is written as 
\begin{align}\label{Eq:LHS-definition}
\RFone{\sigma_{a|x}\stackrel{\rm LHS}{=}\sum_\lambda P_\lambda P(a|x,\lambda) \rho_\lambda\quad\forall\,a,x}
\end{align}
for some hidden variable $\lambda$, (conditional) probability distributions $\{P_\lambda\}_\lambda,\{P(a|x,\lambda)\}_{a,x,\lambda}$ and states $\rho_\lambda$. Physically, it tells us that although ${\bm\sigma}$ is produced by performing different measurements ($x$) on the input state, it can be simulated by doing {\em no} measurement---agent $A$ prepares the state $\rho_{\lambda}$ with probability $P_\lambda$, and then process the classical indices via $P(a|x,\lambda)$. Let ${\bf LHS}$ be the set of all LHS state assemblages. In quantum theory, significantly, there exists ${\bm\sigma}\notin{\bf LHS}$. Such a state assemblage is called {\em steerable}, as $A$ can influence (i.e., ``steer'') $B$'s states quantum-mechanically via some measurements.

\subsection{Framework}
Now, we detail our framework. For a given collection of instruments $\mathbfcal{E}$, we aim to thermodynamically quantify its incompatibility. To this end, we first measure a thermal equilibrium state via $\mathbfcal{E}$. This will drive the system out of thermal equilibrium, generating quantum signatures from $\mathbfcal{E}$'s incompatibility. Then, we let the system thermalise, and we analyse the minimal time needed to erase $\mathbfcal{E}$'s incompatibility signature (Fig.~\ref{Fig:Framework}). Conceptually, our framework is a temporal steering scenario where the input state $\rho$ is in thermal equilibrium, and the channel $\mathcal{N}$ is thermalisation.

Formally, our framework starts with a finite-dimensional quantum system in thermal equilibrium described by a {\em thermal state}: $\gamma = e^{-\frac{H}{k_BT}}/{\rm tr}\left(e^{-\frac{H}{k_BT}}\right),$ where $k_B$ is the Boltzmann constant. This describes the Boltzmann distribution with system Hamiltonian $H$ and background temperature $T$. In this work, we always assume $T$ is fixed, finite, and strictly positive, and we only consider Hamiltonians with finite energies. Hence, $\gamma$ is always full-rank. We then apply instruments $\mathbfcal{E}$ on this system just like in temporal steering scenarios: We randomly and uniformly pick an index $x$ and then measure the thermal equilibrium state $\gamma$ via the $x$-th instrument in $\mathbfcal{E}$. When the outcome is $a$, we obtain the un-normalised state $\mathcal{E}_{a|x}(\gamma)$. Collectively, we write $\mathbfcal{E}(\gamma)\coloneqq\{\mathcal{E}_{a|x}(\gamma)\}_{a,x}$, which is a state assemblage. Then, we let the system thermalise via a model
\begin{align}\label{Eq:ThermalisationModel}
\RFone{\mathcal{D}_t^{(h)}(\cdot)\coloneqq h(t)\mathcal{I}(\cdot) + \left[1-h(t)\right]\gamma{\rm tr}(\cdot),}
\end{align}
where $\mathcal{I}(\cdot)$ is the identity channel and \mbox{$h(t):[0,\infty)\to[0,1]$} is a strictly decreasing continuous function satisfying \mbox{$h(0)=1$} and \mbox{$\lim_{t\to\infty}h(t)=0$}. \RFone{Here, the superscript ``$(h)$'' means $\mathcal{D}_t^{(h)}$ is induced by the function $h(t)$ (abbreviated as $h$).} Physically, $\mathcal{D}_t^{(h)}$ models how a system \CYthree{reaches} thermal equilibrium when $t\to\infty$ \RFtwo{under evolution described by $h$. The function $h$ plays an important thermodynamic role, as different thermodynamic conditions lead to different functions $h$. E.g., in a qubit, by imposing preservation of thermal state and quantum detailed balance condition~\cite{Agarwal1973}, we obtain a thermalisation model $\mathcal{D}_t^{(h)}=\mathcal{D}_t^{(h_{\rm pt})}$ with $h(t) = h_{\rm pt}(t)\coloneqq e^{-t/t_0}$~\cite{Roga2010} called {\em partial thermalistaion} (abbreviated as ``$h_{\rm pt}$''), where $t_0$ is the thermalisation time scale (detailed in Supplemental Material I). This is a simple thermalisation model~\cite{Hsieh2020PRR,Hsieh2021PRXQ,Hsieh2020} also obtainable from collision models~\cite{Scarani2002}. Different $h$ functions represent different thermodynamic conditions (see also Refs.~\cite{De_SantisPRA2021,De_Santis2023Quantum}). We keep our approach fully general (Fig.~\ref{Fig:Framework}), so our results apply to a broad range of setups.}

\begin{figure}
\begin{center}
\scalebox{0.8}{\includegraphics{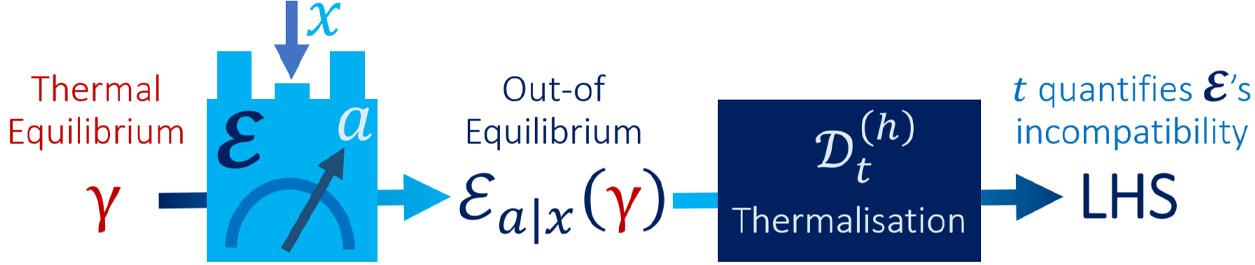}}
\caption{\CYtwo{\bf Framework.}}
\label{Fig:Framework} 
\end{center}
\end{figure}

\RFone{To certify incompatibility, note that $\mathbfcal{E}(\gamma)\in{\bf LHS}$ if $\mathbfcal{E}$ is compatible~\cite{Karthik2015JOSAB}---since} a compatible $\mathbfcal{E}$ achieves $\mathcal{E}_{a|x}(\gamma) = \sum_\lambda P_\lambda P(a|x,\lambda) \rho_\lambda$ with $P_\lambda={\rm tr}[\mathcal{G}_\lambda(\gamma)]$, $\rho_\lambda=\mathcal{G}_{\lambda}(\gamma)/{\rm tr}[\mathcal{G}_\lambda(\gamma)]$ (\CYtwo{$\{\mathcal{G}_\lambda\}_\lambda$'s outcomes act as} hidden variables). Hence, detecting steering from $\mathbfcal{E}(\gamma)$ \CYtwo{certifies $\mathbfcal{E}$'s incompatibility}. We thus consider the following figure-of-merit, which is the longest time \CYtwo{that} $\mathbfcal{E}(\gamma)$ \CYtwo{is} steerable under thermalisation:
\begin{align}\label{Eq:thermalisation time}
t_{\rm min}^{(h)}(\mathbfcal{E})\coloneqq\min\left\{t\ge0\,\middle|\,\left\{\mathcal{D}_t^{(h)}\left[\mathcal{E}_{a|x}(\gamma)\right]\right\}_{a,x}\in{\bf LHS}\right\}.
\end{align}
\RFtwo{Note that $t_{\rm min}^{(h)}$ is $\gamma$-dependent~\footnote{\RFtwo{Different $\gamma$ gives different thermalisation model via Eq.~\eqref{Eq:ThermalisationModel}.}}}, \RFone{and the superscript ``$(h)$'' denotes the $h$-dependence.} If $t_{\rm min}^{(h)}(\mathbfcal{E})>0$, \CYtwo{the collection} $\{\mathcal{D}_t^{(h)}\circ\mathcal{E}_{a|x}\}_{a,x}$ is incompatible during the time window $0\le t<t_{\rm min}^{(h)}(\mathbfcal{E})$, which thus certifies $\mathbfcal{E}$'s incompatibility. Moreover, a higher $t_{\rm min}^{(h)}(\mathbfcal{E})$ implies that $\mathbfcal{E}$'s incompatibility signature can survive longer during thermalisation.

\subsection{Incompatibility signature can survive thermalisation}
\CYtwo{Now,} there are two natural questions. First, {\em can we really have $t_{\rm min}^{(h)}(\mathbfcal{E})>0$?} Second, {\em how to compute \CYtwo{$t_{\rm min}^{(h)}$}?} To answer them, \CYtwo{we show that} $t_{\rm min}^{(h)}$ \CYtwo{equals} a novel steering quantifier that is computable via {\em semi-definite programming} (SDP)~\cite{watrous_2018,SDP-textbook}. To see this, we introduce the {\em (logarithmic) thermalisation steering robustness} subject to $\gamma$, denoted by ${\rm SR}_\gamma$. \CYtwo{For} a state assemblage ${\bm\sigma} = \{\sigma_{a|x}\}_{a,x}$, we define $2^{{\rm SR}_\gamma({\bm\sigma})}$ as
\begin{align}\label{Eq:SRgamma}
\min\left\{\RFone{\mu}\ge1\,\middle|\,\left\{\frac{\sigma_{a|x} + (\RFone{\mu}-1){\rm tr}(\sigma_{a|x})\gamma}{\RFone{\mu}}\right\}_{a,x}\in{\bf LHS}\right\}.
\end{align}
It quantifies steering as ${\rm SR}_\gamma({\bm\sigma})=0$ if and only if ${\bm\sigma}\in{\bf LHS}$, and it is non-increasing under certain allowed operations of steering~\cite{Gallego2015PRX}. \RFone{In Supplemental Material II, we show that
\begin{result}\label{Eq:obs1}
$h\left(t_{\rm min}^{(h)}(\mathbfcal{E})\right) = 2^{-{\rm SR}_\gamma[\mathbfcal{E}(\gamma)]}$ for every $\mathbfcal{E},\gamma$.
\end{result}
Hence, the informational-theoretic quantity} ${\rm SR}_\gamma$ carries the physical meaning as the longest time for incompatibility to survive thermalisation \RFone{(see Appendix A for examples)}. \CYtwo{Since ${\rm SR}_\gamma({\bm\sigma})>0$ for ${\bm\sigma}\notin{\bf LHS}$, we conclude} that $t_{\rm min}^{(h)}>0$ is indeed possible---as long as \mbox{$\mathbfcal{E}(\gamma)\notin{\bf LHS}$,} incompatibility signature can be seen within a {\em non-vanishing} time window $0\le t < t_{\rm min}^{(h)}(\mathbfcal{E})$. This thus answers the first question. \RFtwo{In fact, incompatibility can {\em always} be certified by $t_{\rm min}^{(h)}$:
\begin{result}\label{result: environment result}
\CYthree{$\mathbfcal{E}$ is incompatible (in system $S$) if and only if} there exists an auxiliary system $E$ (with the same dimension as $S$) and a full-rank $\gamma$ in $SE$ such that $\{(\mathcal{E}_{a|x}\otimes\mathcal{I})(\gamma)\}_{a,x}\notin{\bf LHS}$.
\end{result}
\CYthree{With} Result~\ref{Eq:obs1}, incompatibility signature can always be seen in our framework \CYthree{(see Appendix A for remarks)}. Moreover, a one-parameter family of full-rank states $\gamma$'s can universally certify all incompatible instruments. See Supplemental Material III for proofs.} \CYtwo{To answer the second question,} note that Eq.~\eqref{Eq:SRgamma} is an SDP, which is \CYtwo{numerically feasible and efficiently computable} (see \CYtwo{Supplemental Material IV}). Interestingly, although the thermalisation time $t_{\rm min}^{(h)}$ could be hard to measure in practice, we can use ${\rm SR}_\gamma$ to compute it numerically. ${\rm SR}_\gamma$ is thus a numerical tool for estimating the longest time for incompatibility to survive different thermalisation models.

\subsection{Resource theory of incompatible instruments with thermodynamic constraints}
Actually, $t_{\rm min}^{(h)}$ can {\em quantify} incompatible instruments \CYtwo{via a} \RFone{novel resource theory of incompatibility, as we detail below}. A {\em resource theory}~\cite{ChitambarRMP2019} is defined by the sets of free objects (i.e., objects without the resource) and allowed operations (i.e., operations allowed to manipulate the resource). In our resource theory, the resource is incompatibility, and free objects are compatible instruments. Now, we detail the operations allowed to manipulate a given $\mathbfcal{E} = \{\mathcal{E}_{b|y}\}_{b,y}$. \RFone{We will discuss deterministic and stochastic allowed operations separately. For the deterministic ones,} the first thing we allow is pre- and post-processing classical indices. Namely, we still randomly pick some $x$ value. Then, we use a classical distribution $P(y|x)$ to produce an index $y$, which is the input for the instruments $\mathbfcal{E} = \{\mathcal{E}_{b|y}\}_{b,y}$. Suppose its measurement outcome is $b$. Then, conditioned on the knowledge of $x,y,b$, we use another classical distribution $P'(a|xyb)$ to produce the outcome $a$. By summing over $b,y$, we get new instruments \RFone{\mbox{$\{\mathcal{L}_{a|x}\}_{a,x} = \{\sum_{b,y}P'(a|xyb)P(y|x)\mathcal{E}_{b|y}\}_{a,x}$}} solely from classical processing. Apart from this, we further allow adding two {\em Gibbs-preserving channels}~\cite{Lostaglio_2019,Faist2015NJP,HorodeckiPRL2003,HorodeckiPRA2003}, $\mathcal{P}$ and \CYtwo{$\mathcal{Q}$}, before and after the above process (i.e., $\mathcal{L}_{a|x}\mapsto\CYtwo{\mathcal{Q}}\circ\mathcal{L}_{a|x}\circ\mathcal{P}$ \RFone{$\forall\,a,x$}). They satisfy $\CYtwo{\mathcal{Q}}(\gamma) = \gamma = \mathcal{P}(\gamma)$, i.e., they cannot drive thermal equilibrium out of equilibrium. Altogether, these define the {\em deterministic allowed operations}, \RFone{denoted by $\mathbb{F}$,} in our resource theory: \RFone{$\mathbfcal{E}\mapsto\mathbb{F}(\mathbfcal{E})\coloneqq\{\mathbb{F}(\mathbfcal{E})_{a|x}\}_{a,x}$, where $\mathbb{F}(\mathbfcal{E})$ is a collection of instruments whose elements are defined by
\begin{align}\label{Eq:DAO}
\mathbb{F}(\mathbfcal{E})_{a|x}\coloneqq\sum_{b,y}P'(a|xyb)P(y|x)\mathcal{Q}\circ\mathcal{E}_{b|y}\circ\mathcal{P}\quad\forall\,a,x.
\end{align}
We thus obtain a novel resource theory of instrument incompatibility, which is the ones in Refs.~\cite{Heinosaari2014,Buscemi2023,Ji2023} subject to Gibbs-preserving condition. Notably, Eq.~\eqref{Eq:DAO} is fully general. More thermodynamic conditions can be imposed if needed. See Appendix B for remarks and schematic illustration (Fig.~\ref{Fig:DAO}).} Importantly, deterministic allowed operations cannot generate incompatibility from \CYtwo{compatible ones---}$\mathbb{F}(\mathbfcal{E})$ is compatible if $\mathbfcal{E}$ is compatible. Hence, their output is less resourceful than the input. If $ t_{\rm min}^{(h)}$ is a suitable measure of incompatibility, it should also obey this rule---it cannot increase under deterministic allowed operations. As proved in Supplemental Material V, this is indeed the case:
\begin{result}\label{Result:ResourceMonotone}
\mbox{$t_{\rm min}^{(h)}[\mathbb{F}(\mathbfcal{E})]\le t_{\rm min}^{(h)}(\mathbfcal{E})$} for every deterministic allowed operation $\mathbb{F}$, and \mbox{$t_{\rm min}^{(h)}(\mathbfcal{E})=0$} if $\mathbfcal{E}$ is compatible.
\end{result} 
Hence, $t_{\rm min}^{(h)}$ is \CYtwo{an appropriate quantifier} subject to the deterministic allowed operations.

\RFone{Now, let us consider stochastic allowed operations, which} are important in stochastic manipulations of quantum resources~\cite{Regula2022Quantum,Regula2022PRL,Regula2024reversibility,Ku2022NC,Ku2023,Hsieh2023}. Here, we consider a simple type of stochastic operations called {\em ${\rm LF_1}$ filters}~\cite{Ku2022NC,Ku2023,Hsieh2023}. Formally, a ${\rm LF_1}$ filter takes the form $(\cdot)\mapsto K(\cdot)K^\dagger$ with $K^\dagger K\le\id$; i.e., it is a filter with a single Kraus operator. It is simple enough to be practically feasible, and non-trivial enough to, e.g., optimally distil steering~\cite{Ku2022NC,Ku2023,Hsieh2023}. Here, we investigate whether ${\rm LF_1}$ filters can increase quantum signature's survival time under thermodynamic constraints. Suppose we have produced $\mathbfcal{E}(\gamma)$. Then we ask: {\em Can ${\rm LF_1}$ filters increase $t_{\rm min}^{(h)}$ without changing measurement statistics and thermal state?} Namely, we consider ${\rm LF_1}$ filters satisfying (i) ${\rm tr}[K\mathcal{E}_{a|x}(\gamma) K^\dagger] / p_\gamma = {\rm tr}[\mathcal{E}_{a|x}(\gamma)]$ $\forall\,a,x$, and (ii) $K\gamma K^\dagger / p_\gamma=\gamma$~\footnote{\CYthree{As an example of a filter with condition (ii), consider $K=\sqrt{p}U$ with $0<p\le1$ and an energy-conserving unitary $U$.}}. Here, $p_\gamma = {\rm tr}(K\gamma K^\dagger)$ is filter's success probability. Physically, condition (i) means that the system looks the same if we only \CYtwo{examine} classical statistics. Condition (ii) \CYtwo{means the} filter cannot drive thermal equilibrium out of equilibrium. Our {\em stochastic allowed operations} are ${\rm LF_1}$ filters with conditions (i) and (ii) \CYthree{(they are closed under composition; see Supplemental Material VI)}.
\CYtwo{As proved in Supplemental Material VI,} we have:
\begin{result}\label{Result:No-extension}
Suppose $\mathbfcal{E}$'s average channel is Gibbs-preserving.
Then no stochastic allowed operation can increase $t_{\rm min}^{(h)}(\mathbfcal{E})$.
\end{result}
Hence, to increase $t_{\rm min}^{(h)}$ stochastically, we need to either allow $\mathbfcal{E}$'s average channel to \CYtwo{generate non-equilibrium from equilibrium}, or drop one of the conditions (i) and (ii). In particular, if $\mathbfcal{E}$'s average channel is Gibbs-preserving, the ${\rm LF_1}$ filter needs to either drive thermal equilibrium out of equilibrium, or change the classical statistics of the measurements. \CYtwo{Result~\ref{Result:No-extension}} thus uncovers the cost of stochastically increasing $t_{\rm min}^{(h)}$. Notably, $t_{\rm min}^{(h)}$ is {\em monotonic} under stochastic allowed operations\CYtwo{---it is} a quantifier even with stochastic manipulations. Interestingly, \CYtwo{Results~\ref{Eq:obs1} and~\ref{Result:No-extension} imply} that we {\em cannot} stochastically distil steering in the current thermodynamic setting. Since ${\rm LF_1}$ filters are powerful for steering distillation~\cite{Ku2022NC,Ku2023,Hsieh2023}, Result~\ref{Result:No-extension} shows that imposing thermodynamic constraints can strongly limit the strength of distilling quantum resources.

\subsection{Incompatible advantages in work extraction}
After knowing how to use thermalisation to quantify incompatibility, a natural question is whether any other thermodynamic task can also do so. An appropriate answer will be useful for studying quantum signatures in different thermodynamic contexts~\cite{Oppenheim2002,delRio2011,Lostaglio2020PRL,Levy2020PRXQ,Puliyil2022PRL,Upadhyaya2023,JiPRL2022,BeyerPRL2019,ChanPRA2022,Centrone2024,Hsieh2023IP,Lipka-BartosikPRL2024,NavascuesPRL2015}. To this end, we utilise the work extraction task \CYtwo{introduced} in Ref.~\cite{Hsieh2023IP}. Consider a finite-dimensional system with Hamiltonian $H$ and \CYthree{a background} temperature $T$. When this system is in a state $\rho$, one can (on average) extract the following optimal amount of work from it (e.g., by the work extraction scenario in Ref.~\cite{Skrzypczyk2014}): $W_{{\rm ext}}(\rho,H) = (k_BT\ln2)D\left(\rho\,\|\,\gamma_H\right),$ where $D(\rho\,\|\,\sigma)\coloneqq{\rm tr}\left[\rho\left(\log_2\rho - \log_2\sigma\right)\right]$ is the {\em quantum relative entropy}. Here, $\gamma_H$ is the thermal state associated with $H$ and $T$ \CYtwo{(we make the Hamiltonian dependence explicit)}. When \CYtwo{$H=0$}, we have $W_{{\rm inf}}(\rho) \coloneqq W_{{\rm ext}}(\rho,0) = (k_BT\ln2)D\left(\rho\,\|\,\id/d\right).$ This is the optimal work extractable from $\rho$'s information content. Then, Ref.~\cite{Hsieh2023IP} considers the deficit-type work value (inspired by Ref.~\cite{Oppenheim2002}): $\Wdiff(\rho,H)\coloneqq W_{{\rm ext}}(\rho,H) - W_{{\rm inf}}(\rho).$ It characterises the difference between $\rho$'s extractable work in two scenarios: One \CYtwo{with $H$; one with} zero Hamiltonian \RFone{(see also Fig.~1 in Ref.~\cite{Hsieh2023IP})}. Physically, $\Wdiff$ is the energy extracted from $H$'s \RFtwo{\em non-vanishing energy difference} when the system is in $\rho$. As proved in Ref.~\cite{Hsieh2023IP}, the work deficit $\Wdiff$ can certify 
\CYthree{quantum resources such as entanglement, coherence, etc}. We now show that it can actually {\em quantify} steering and incompatibility.

\RFone{Consider four batches of experiments (Fig.~\ref{Fig:Wdiff}), with a given thermal state $\gamma_{\rm H_{\rm in}}$ (subject to Hamiltonian $H_{\rm in}$), a fixed set of Hamiltonians ${\bf H}=\{H_{a|x}\}_{a,x}$, and $\mathbfcal{E}$ whose average channel preserves $\gamma_{H_{\rm in}}$. In the first batch, we apply $\mathbfcal{E}$ on $\gamma_{H_{\rm in}}$, producing} ${\bm\sigma}=\mathbfcal{E}(\gamma_{H_{\rm in}})$. With uniformly distributed \mbox{$x=0,...,|{\bf x}|-1$}, we obtain the state \mbox{$\hat{\sigma}_{a|x}\coloneqq\sigma_{a|x}/{\rm tr}(\sigma_{a|x})$} with probability $P(a,x)\coloneqq{\rm tr}(\sigma_{a|x})/|{\bf x}|$. \RFone{Upon obtaining indices $(a,x)$, we quench the Hamiltonian to $H_{a|x}$ and extract work from $\hat{\sigma}_{a|x}$, obtaining $W_{{\rm ext}}(\hat{\sigma}_{a|x},H_{a|x})$. Sufficiently many trials give $\sum_{a,x}P(a,x)W_{{\rm ext}}(\hat{\sigma}_{a|x},H_{a|x})$. The second batch is identical to the first, except we always quench the Hamiltonian to zero rather than $H_{a|x}$. This gives $\sum_{a,x}P(a,x)W_{{\rm inf}}(\hat{\sigma}_{a|x})$. Using these two batches, we obtain $\sum_{a,x}P(a,x)\Wdiff(\hat{\sigma}_{a|x},H_{a|x})$. In the third batch, we start with $\gamma_{H_{\rm in}}$, quench its Hamiltonian to $H_{a|x}$ with probability $P(a,x)$ (obtained in the first two batches), and extract work, obtaining $\sum_{a,x}P(a,x)W_{{\rm ext}}(\gamma_{H_{\rm in}},H_{a|x})$. The fourth batch is identical to the third, except we always quench the Hamiltonian to zero, resulting in $\sum_{a,x}P(a,x)W_{{\rm inf}}(\gamma_{H_{\rm in}})$. These two batches give $\sum_{a,x}P(a,x)\Wdiff(\gamma_{H_{\rm in}},H_{a|x})$. See also Fig.~\ref{Fig:Wdiff}. Let $\gamma = \gamma_{H_{\rm in}}$. Then, we consider the following figure-of-merit} $\WdiffSA({\bm\sigma},{\bf H})~\coloneqq~\sum_{a,x}P(a,x)\left[\Wdiff(\hat{\sigma}_{a|x},H_{a|x}) - \Wdiff(\gamma,H_{a|x})\right],$ \CYtwo{which} is a form of {\em energy}. Hence, if we can use $\WdiffSA$ to express ${\rm SR}_\gamma$, then {\em ${\rm SR}_\gamma$ can be evaluated by energy extraction experiments.} This can generalise the recent findings on certifying quantum steering via heat engines~\cite{JiPRL2022,BeyerPRL2019,ChanPRA2022} to instrument incompatibility. In \RFtwo{Supplemental Material VIII, we show this is possible. For a parameter $0<\eta<\infty$, let $\mathbfcal{H}_\eta({\bm\sigma})\coloneqq\{{\bf H}\;|\; \WdiffSA({\bm\sigma},{\bf H})\ge k_BT\eta\}$ containing ${\bf H}$'s with $\WdiffSA$ lower bounded by the energy scale $k_BT\eta$. Then we have:}

\begin{figure}
\begin{center}
\scalebox{0.8}{\includegraphics{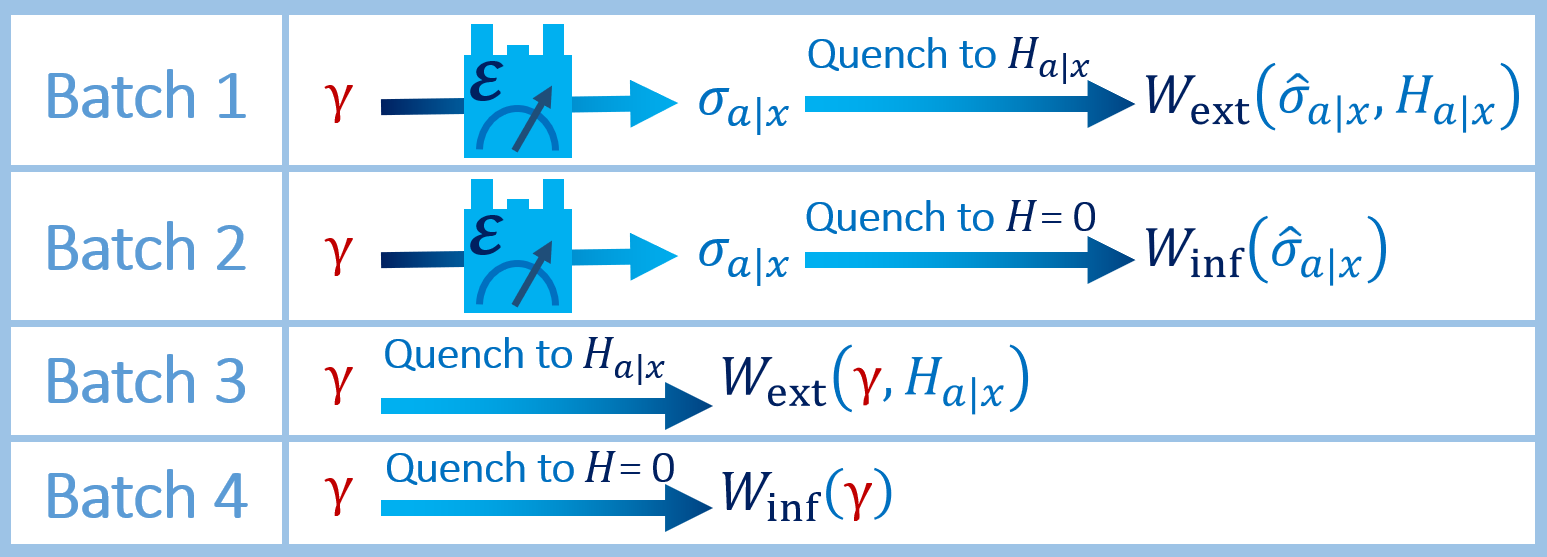}}
\caption{\RFone{{\bf Four-batches work extraction.}}}
\label{Fig:Wdiff} 
\end{center}
\end{figure}

\begin{result}\label{Result:WorkExtraction}
Let ${\bm\sigma}$ be a steerable state assemblage with reduced state $\gamma$ \CYtwo{and} ${\rm tr}(\sigma_{a|x})>0$ $\forall\,a,x$. Then, \RFtwo{for every $0<\eta<\infty$,} 
\begin{align}\label{Eq:Result:WorkExtraction}
2^{{\rm SR}_\gamma({\bm\sigma})} = \max_{{\bf H}\in\RFtwo{\mathbfcal{H}_\eta}({\bm\sigma})}\frac{\WdiffSA({\bm\sigma},{\bf H})}{\max_{{\bm\tau}\in{\bf LHS}({\bm\sigma})}\WdiffSA({\bm\tau},{\bf H})},
\end{align}
where ${\bf LHS}({\bm\sigma})\coloneqq\{{\bm\tau}\in{\bf LHS}\,|\,{\rm tr}(\sigma_{a|x})={\rm tr}(\tau_{a|x})\;\forall\,a,x\}$.
\end{result}
Hence, steering is quantified {\em necessarily and sufficiently} by the highest work extraction advantage over state assemblages in ${\bf LHS}({\bm\sigma})$. To see the implication to incompatibility, consider $\mathbfcal{E}$ with Gibbs-preserving average channel and set ${\bm\sigma}=\mathbfcal{E}(\gamma)$ (with reduce state $\gamma$). By \CYtwo{Results~\ref{Eq:obs1} and~\ref{Result:WorkExtraction}}, $t_{\rm min}^{(h)}(\mathbfcal{E})>0$ implies ${\rm SR}_\gamma({\bm\sigma})>0$, leading to work extraction advantage. \RFtwo{See Appendix C for further remarks and an illustrative example for certifying quantum advantage.}

\CYtwo{Unexpectedly, Result~\ref{Result:WorkExtraction} implies} that {\em we can use energy extraction experiments to obtain the time value $t_{\rm min}^{(h)}$.} To see this, suppose ${\bf H_*}$ are Hamiltonians achieving the maximisation in Eq.~\eqref{Eq:Result:WorkExtraction} for ${\bm\sigma}=\mathbfcal{E}(\gamma)$. With partial thermalisation $h_{\rm pt}(t)\coloneqq e^{-t/t_0}$ and \CYtwo{Result~\ref{Eq:obs1} [i.e., Eq.~\eqref{Eq:example-pt}]}, we obtain $t_{\rm min}^{\left(h_{\rm pt}\right)}(\mathbfcal{E})/t_0 = \ln\left(\WdiffSA\left[\mathbfcal{E}(\gamma),{\bf H}_*\right]/\max_{{\bm\tau}\in{\bf LHS}({\bm\sigma})}\WdiffSA({\bm\tau},{\bf H}_*)\right).$ \CYtwo{For any $\varepsilon\ge1$}, detecting $\WdiffSA\left[\mathbfcal{E}(\gamma),{\bf H}_*\right]/\max_{{\bm\tau}\in{\bf LHS}({\bm\sigma})}\WdiffSA({\bm\tau},{\bf H}_*)~>~\CYtwo{\varepsilon}$ \CYtwo{is thus {\em equivalent to}} $t_{\rm min}^{\left(h_{\rm pt}\right)}(\mathbfcal{E}) > t_0\ln\CYtwo{\varepsilon}$. This illustrates how energy extraction, which is completely different from time measurement, \CYtwo{can estimate} the thermalisation time $t_{\rm min}^{\left(h\right)}$. Finally, in \CYtwo{Supplemental Material VIII}, we show that Hamiltonians certifying quantum signatures can be efficiently found via SDP.

\subsection{Quantum signature vanishes when system thermalises}
\RFtwo{
So far, we only discuss thermalisation via {\em Markovian} evolutions. If a {\em general} evolution thermalises a system, will quantum signature vanish at finite time? Here, an {\em evolution} is a one-parameter family of {\em positive} trace-preserving linear maps $\{\mathcal{N}_t\}_{t=0}^\infty$. It describes how a state $\rho$ evolves to $\mathcal{N}_t(\rho)$ at time $t$. We say $\{\mathcal{N}_t\}_{t=0}^\infty$ {\em thermalises} the system to thermal state $\gamma$ if $\lim_{t\to\infty}\norm{\mathcal{N}_t(\cdot) - \gamma{\rm tr}(\cdot)}_\diamond = 0$ ($\norm{\cdot}_\diamond$ is the {\em diamond norm}~\cite{Aharonov1998,watrous_2018}). Can non-Markovianity {\em protect} quantum signatures of steering and incompatibility? This is, surprisingly, {\em impossible} (proved in Supplemental Material VII):
\begin{result}\label{Result:Strong-no-go}
Let ${\bm\sigma}$ be a state assemblage with reduced state $\gamma$. Suppose ${\rm tr}(\sigma_{a|x})>0$ $\forall\,a,x$. If $\{\mathcal{N}_t\}_{t=0}^\infty$ thermalises the system to $\gamma$, there is a finite $t_*$ such that $\mathcal{N}_t({\bm\sigma})\in{\bf LHS}$ $\forall\;t>t_*.$
\end{result}
By considering $\mathbfcal{E}$ with a Gibbs-preserving average channel [and $\mathbfcal{E}(\gamma)$'s reduce state is $\gamma$] and set ${\bm\sigma} = \mathbfcal{E}(\gamma)$, we obtain the same no-go result for incompatibility signature. Hence, thermalisation can surpass the strength of non-Markovianity.}

\subsection{Discussions}
Many questions remain open. First, making the work extraction game more experimentally feasible is important. This will be addressed in our follow-up projects. Second, it is valuable to study other types of incompatibility~\cite{Ku2018PRA,Ku2021PRR,Ku2022PRXQ,Hsieh2024arXiv,Mitra2022,Mitra2023,Buscemi2023,Heinosaari2016,Heinosaari2014,Ji2023,Hsieh2022PRR,Haapasalo2021,Hsieh2023-2} in a thermodynamic context. Finally, it is interesting to know whether ${\rm SR}_\gamma$ is (sub-)additive, which is subtle due to the superactivation of steering~\cite{Hsieh2016PRA,Quintino2016}.

\section{Acknowledgements}
The authors acknowledge fruitful discussions with Manuel Gessner, Huan-Yu Ku, M\'at\'e Farkas, Bartosz Regula, Paul Skrzypczyk, Benjamin Stratton, and Hao-Cheng Weng. We especially thank Hao-Cheng Weng for comments and discussions on the experimental feasibility of the work extraction tasks in Nitrogen-Vacancy centre systems. 
C.-Y.~H. acknowledges support from the Royal Society through Enhanced Research Expenses (on grant NFQI), the ERC Advanced Grant (on grant FLQuant), and the Leverhulme Trust Early Career Fellowship (on grant ``Quantum complementarity: a novel resource for quantum science and technologies'' with grant number ECF-2024-310). 
S.-L.~C. is supported by the National Science and Technology Council (NSTC) Taiwan (Grant No.~NSTC 111-2112-M-005-007-MY4) and National Center for Theoretical Sciences Taiwan (Grant No.~NSTC 113-2124-M-002-003).

\section{Appendix}
\subsection{Appendix A: Illustrative example of Result~\ref{Eq:obs1} and remarks on Result~\ref{result: environment result}}
\RFone{
To illustrate the physics of Result~\ref{Eq:obs1}, let us use the partial thermalisation $h(t) = h_{\rm pt}(t)\coloneqq e^{-t/t_0}$. 
For simplicity, let us write $t_* = t_{\rm min}^{\left(h_{\rm pt}\right)}(\mathbfcal{E})$.
Recall that $h_{\rm pt}$ is the abbreviation of the function $h_{\rm pt}(t)$, and the superscript ``$(h_{\rm pt})$'' denotes the $h_{\rm pt}$-dependence.
Then we have, by definition, $h_{\rm pt}\left(t_*\right) = e^{-t_*/t_0}$.
On the other hand, setting $h=h_{\rm pt}$ in Result~\ref{Eq:obs1}, we obtain
$h_{\rm pt}\left(t_*\right) = 2^{-{\rm SR}_\gamma[\mathbfcal{E}(\gamma)]} = e^{-{\rm SR}_\gamma[\mathbfcal{E}(\gamma)]\ln2}$.
Taking the logarithm of both sides of the above equality \CYthree{gives}
\begin{align}\label{Eq:example-pt}
t_{\rm min}^{\left(h_{\rm pt}\right)}(\mathbfcal{E})/(t_0\ln2) = {\rm SR}_\gamma[\mathbfcal{E}(\gamma)].
\end{align}
In this case, ${\rm SR}_\gamma$ is the thermalisation time in the unit $t_0\ln2$.

\CYthree{Now, we remark on Result~\ref{result: environment result}.
First, by changing $\mathbfcal{E}$ into \mbox{$\{\mathcal{E}_{a|x}\otimes\mathcal{I}\}_{a,x}$} in Eq.~\eqref{Eq:thermalisation time} and $\mathbfcal{E}(\gamma)$ into $\{(\mathcal{E}_{a|x}\otimes\mathcal{I})(\gamma)\}_{a,x}$ in Result~\ref{Eq:obs1}, Result~\ref{result: environment result} is connected to our framework (in $SE$).
Second, Result~\ref{result: environment result} needs an auxiliary system to certify all incompatible instruments.
An open problem is whether this is possible {\em without} utilising any auxiliary system, which is left for future projects.
Finally, the thermality of $\gamma$ in Result~\ref{result: environment result} is crucial if we want to combine Result~\ref{result: environment result}  with Eq.~\eqref{Eq:thermalisation time} and Result~\ref{Eq:obs1} to certify incompatibility via thermalisation time (which needs $\gamma$ to be the thermal state).
}
}

\subsection{Appendix B: Remarks on deterministic allowed operations and a schematic illustration}
Many thermodynamically allowed operations existing in the literature are Gibbs-preserving plus additional thermodynamic conditions.
Equation~\eqref{Eq:DAO} thus provides a general way to include different additional thermodynamic conditions.
For instance, with the energy-conserving condition, $\mathcal{P},\mathcal{Q}$ in Eq.~\eqref{Eq:DAO} becomes thermal operations~\cite{Janzing2000,Brandao2013PRL} (see also, e.g., Refs.~\cite{Lostaglio_2019}), which are Gibbs-preserving channels that can be realised by an energy-conserving unitary and a thermal bath. 
Also, if we focus on incoherent dynamics, one can consider Gibbs-preserving incoherent channels $\mathcal{P},\mathcal{Q}$ in the given energy eigenbasis.
\CYthree{See} Fig.~\ref{Fig:DAO} for a schematic illustration of Eq.~\eqref{Eq:DAO}.
\CYthree{Finally, as proved in Supplemental Material V,
the composition of any two deterministic allowed operations is again a deterministic allowed operation---they are closed under composition.
}

\begin{figure}
\begin{center}
\scalebox{0.8}{\includegraphics{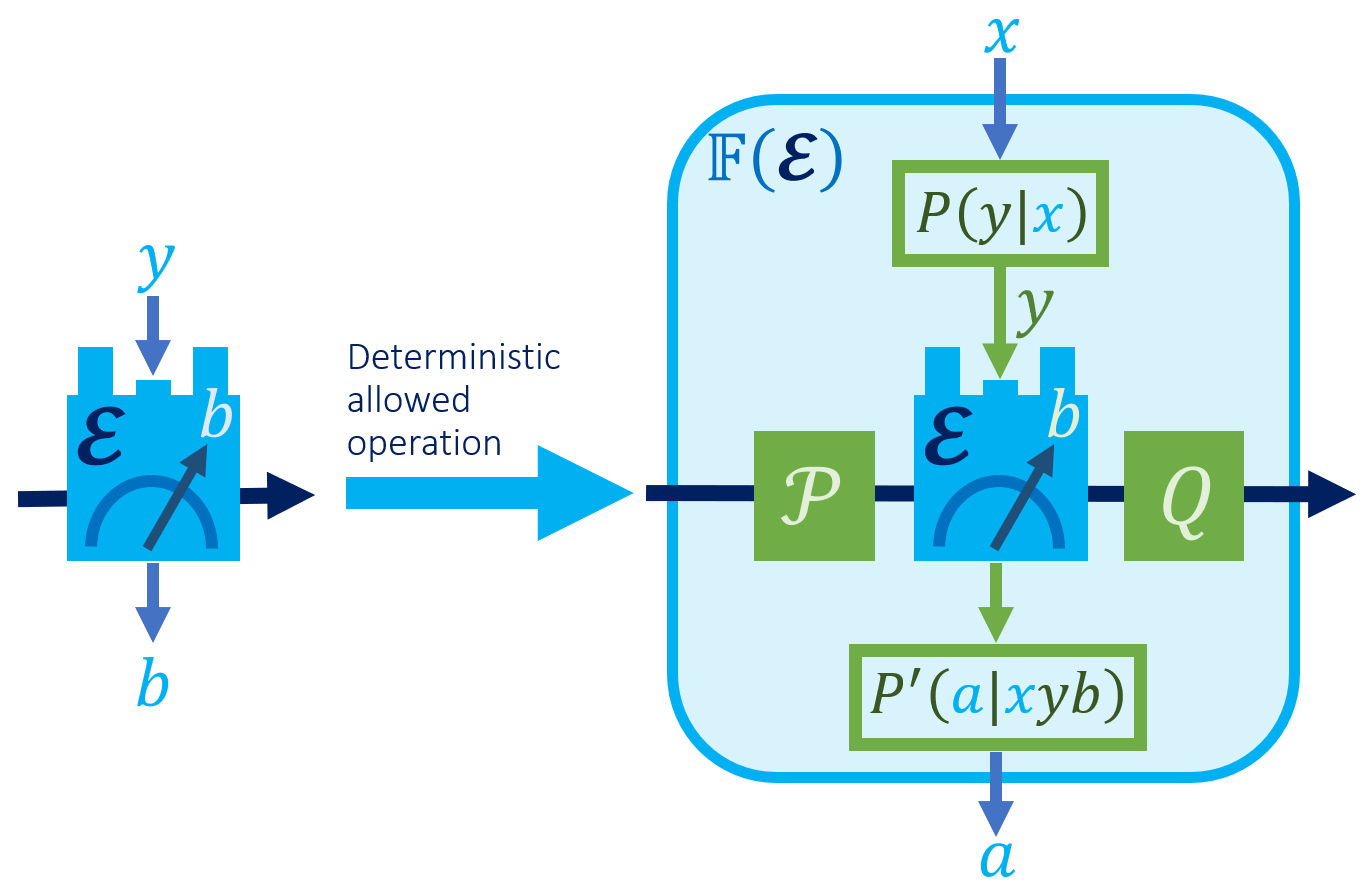}}
\caption{
\RFone{{\bf Illustration of deterministic allowed operations.}
Deterministic allowed operations provide a deterministic way to transform a collection of instruments ($\mathbfcal{E}$; box on the left-hand side) into another collection of instruments [$\mathbb{F}(\mathbfcal{E})$; light blue box on the right-hand side].
In the figure, horizontal lines are for quantum information processing; vertical lines are for classical data processing.
To achieve a deterministic allowed operation, one uses the classical pre-processing $P(y|x)$ to generate the index $y$ as the classical input of $\mathbfcal{E}$.
By obtaining measurement outcome (indicated by the classical index $b$), one further uses the post-processing $P'(a|xyb)$ to generate the index $a$.
Finally, one also uses a quantum pre- and post-processing channels $\mathcal{P},\mathcal{Q}$ to process the quantum system.
}
}
\label{Fig:DAO} 
\end{center}
\end{figure}

\subsection{Appendix C: Remarks and example for Result~\ref{Result:WorkExtraction}}
\RFtwo{
To certify quantum signatures, it {\em suffices} to check whether ${\bm\sigma}$ can outperform the set
${\bf LHS}({\bm\sigma})$. 
This is because ${\bf LHS}({\bm\sigma})$ contains all LHS state assemblages ${\bm\tau}$ whose classical statistics ${\rm tr}(\tau_{a|x})$'s are identical to ${\bm\sigma}$'s; i.e., ${\rm tr}(\tau_{a|x})={\rm tr}(\sigma_{a|x})$ $\forall\,a,x$.
Once we obtain ${\bm\sigma}$, we also know ${\rm tr}(\sigma_{a|x})$'s, and LHS state assemblages with different classical statistics surely cannot simulate ${\bm\sigma}$. 
Indeed, as shown in Ref.~\cite{Hsieh2023IP}, we have \mbox{${\bm\sigma}\notin{\bf LHS}$} if and only if ${\bm\sigma}\notin{\bf LHS}({\bm\sigma})$.
Hence, with the knowledge of classical statistics ${\rm tr}(\sigma_{a|x})$, quantum signatures can be claimed once ${\bf LHS}({\bm\sigma})$ is outperformed.
As ${\bf LHS}({\bm\sigma})$ is a convex set much smaller than ${\bf LHS}$, Result~\ref{Result:WorkExtraction} can certify quantum signature in a way more efficient than optimising over ${\bf LHS}$.
}

\CY{
Now, we provide an example to illustrate the experimental possibility of witnessing the quantum advantage in Result~\ref{Result:WorkExtraction}, i.e., \mbox{$2^{{\rm SR}_\gamma({\bm\sigma})}>1$}.
This is equivalent to demonstrating
\begin{align}\label{Eq:comp--1}
\max_{{\bm\tau}\in{\bf LHS}({\bm\sigma})}\WdiffSA({\bm\tau},{\bf H})< \WdiffSA({\bm\sigma},{\bf H})
\end{align} 
for some set of Hamiltonians ${\bf H}$ and some $\gamma$.
Since the work values $W_{\rm ext}, W_{\rm inf}$ can be obtained by existing work extraction scenarios (e.g., the one in Ref.~\cite{Skrzypczyk2014}), we focus on seeking experimentally feasible Hamiltonians and state assemblages.}

\CY{
The example starts with a qubit system and \mbox{$\gamma = \id/2$}.
Consider the Pauli $X,Z$ observables defined by \mbox{$X = \proj{+} - \proj{-}$} and \mbox{$Z = \proj{0} - \proj{1}$}, respectively, where \mbox{$\ket{\pm}\coloneqq(\ket{0}\pm\ket{1})/\sqrt{2}$}.
Then, by applying the corresponding projective measurements $\{\proj{+},\proj{-}\}$ and $\{\proj{0},\proj{1}\}$ on $\id/2$, we obtain a qubit state assemblage, ${\bm\sigma}^{\rm Pauli}\coloneqq\{\sigma_{a|x}^{\rm Pauli}\}_{a,x}$, given by
\begin{align}\label{Eq: sigma Pauli example}
&\sigma_{0|0}^{\rm Pauli} = \proj{+}/2,\;\sigma_{1|0}^{\rm Pauli} = \proj{-}/2,\\
&\sigma_{0|1}^{\rm Pauli} = \proj{0}/2,\;\sigma_{1|1}^{\rm Pauli} = \proj{1}/2.\label{Eq: sigma Pauli example 2}
\end{align}
In this case, we have $a,x=0,1$ and $P(a,x)=1/2$ for every $a,x$. 
Then, demonstrating Eq.~\eqref{Eq:comp--1} is equivalent to finding some Hamiltonians $H_{a|x}$'s achieving
\begin{align}\label{Eq:12}
\max_{{\bm\tau}\in{\bf LHS}\left({\bm\sigma}^{\rm Pauli}\right)}\sum_{a,x}
\frac{\Wdiff(\hat{\tau}_{a|x},H_{a|x})}{2}<\sum_{a,x}
\frac{\Wdiff(\hat{\sigma}_{a|x}^{\rm Pauli},H_{a|x})}{2},
\end{align}
which is for $\gamma = \id/2$.
Using a steering inequality detailed in Supplemental Material IX, we choose Hamiltonians ($a=0,1$)
\begin{align}\label{Eq:13}
H_{a|0}^{\rm Pauli}=(-1)^ak_BT\delta\times X\;\;\&\;\; H_{a|1}^{\rm Pauli}=(-1)^ak_BT\delta\times Z,
\end{align}
which are Pauli $X,Z$ observables in the energy scale $k_BT\delta$ for some parameter $0<\delta<\infty$.
When $\gamma=\id/2$, we have $\WdiffSA({\bm\sigma}^{\rm Pauli},{\bf H}^{\rm Pauli}) = k_BT\delta$, and we can choose $\eta\le\delta$ in Result~\ref{Result:WorkExtraction}.
Now, for every $0<\delta<\infty$ and $0<T<\infty$, we have (see Supplemental Material IX for details)
\begin{align}\label{Eq:example classical bound}
\max_{{\bm\tau}\in{\bf LHS}\left({\bm\sigma}^{\rm Pauli}\right)}\sum_{a,x}
\frac{\Wdiff(\hat{\tau}_{a|x},H_{a|x}^{\rm Pauli})}{2}\le k_BT\left[\sqrt{2}\delta + 2\ln(\cosh\delta)\right],
\end{align}
which is strictly less than
\begin{align}\label{Eq:example quantum bound}
\sum_{a,x}
\frac{\Wdiff(\hat{\sigma}_{a|x},H_{a|x}^{\rm Pauli})}{2} = 2k_BT\left[\delta + \ln(\cosh\delta)\right].
\end{align}
Hence, to certify ${\bm\sigma}^{\rm Pauli}$'s work extraction advantage [over the set ${\bf LHS}\left({\bm\sigma}^{\rm Pauli}\right)$], it suffices to choose Hamiltonians as Pauli $X,Z$ in a suitable energy scale (i.e., $k_BT\delta$)}.
\CY{To conclude the discussion, let us consider a realistic system, the {\em Nitrogen-Vacancy (NV) centre} (see, e.g., Refs.~\cite{Awschalom2018NP,Weng2023ACSPhotonics,Neumann2010Science,Doherty2013PR,Maze2011NJP}), as an example. 
It is one of the most well-studied quantum platforms, where sequential Pauli $X,Z$ measurements on NV nuclear spin state (e.g., N14) are experimentally implementable~\cite{Neumann2010Science}.
By taking \mbox{$\delta = \eta = 1.59976\times10^{-7}$} and room temperature \mbox{$T = 300K$}, we obtain \mbox{$k_BT\delta\approx4.1357\times10^{-9}{\rm eV}$ ($\approx1{\rm MHz}$)}, which is within the NV nuclear spin's energy scale (typically 0.1MHz to 10MHz~\cite{Doherty2013PR,Maze2011NJP}).
The classical bound [Eq.~\eqref{Eq:example classical bound}] is \mbox{$\sqrt{2}\times k_BT\delta\approx5.8479\times10^{-9}{\rm eV}$}, and the quantum bound [Eq.~\eqref{Eq:example quantum bound}] is \mbox{$2\times k_BT\delta\approx8.2714\times10^{-9}{\rm eV}$}.
Their difference is $2.4235\times10^{-9}{\rm eV}$ ($\approx0.5860{\rm MHz}$), which is still within the NV nuclear spin's energy scale.}

\newpage
\section{Supplemental Material}

\CYtwo{
\subsection*{Supplemental Material I: Thermodynamic role of the function $h$}
Here, we illustrate how to obtain a specific function $h$ by imposing thermodynamic conditions.
Consider a qubit system 
\CYthree{subject to a Hamiltonian and a background temperature so that the thermal equilibrium is described by the thermal state}
\begin{align}
\gamma = p\proj{0}+(1-p)\proj{1}.
\end{align}
Then, by imposing the quantum detailed balance condition~\cite{Agarwal1973} and Gibbs-preserving condition, 
a continuous time evolution starting at identity can be described by the {\em Davies map}~\cite{Davies1974}.
More precisely, subject to the above conditions, 
Eq.~(3.13) in Ref.~\cite{Roga2010} (see also the calculation on pages 322 and 323 in Ref.~\cite{Roga2010}) gives the single-qubit Davies map as follows:
\begin{align}
\Lambda^{\rm Davies}_t(\cdot)&\coloneqq \left[1-(1-p)\left(1-e^{-At}\right)\right]\proj{0}(\cdot)\proj{0}\nonumber\\
&\quad+p\left(1-e^{-At}\right)\ket{0}\bra{1}(\cdot)\ket{1}\bra{0}\nonumber\\
&\quad+(1-p)\left(1-e^{-At}\right)\ket{1}\bra{0}(\cdot)\ket{0}\bra{1}\nonumber\\
&\quad+\left[1-p\left(1-e^{-At}\right)\right]\proj{1}(\cdot)\proj{1}\nonumber\\
&\quad+e^{-\Gamma t}\left[\proj{0}(\cdot)\proj{1}+\proj{1}(\cdot)\proj{0}\right],
\end{align}
where $A,\Gamma$ are parameters subject to $0\le A/2\le\Gamma$.
In the special case that $\Gamma = A$, a direct computation shows that
\begin{align}
\Lambda^{\rm Davies}_t(\cdot) = e^{-\Gamma t}\mathcal{I}(\cdot) + \left(1-e^{-\Gamma t}\right){\rm tr}(\cdot)\gamma,
\end{align}
which corresponds to the function $h(t) = e^{-\Gamma t}$, i.e., the partial thermalisation model with thermalistaion time scale $1/\Gamma$.
Note that the above form can also be obtained by setting $a=(1-p)(1-c)$ in the qubit state $\rho'$ defined at the beginning of page 323 in Ref.~\cite{Roga2010}.
}

\CY{
\subsection*{Supplemental Material II: Proof of Result~\ref{Eq:obs1}}
\begin{proof}
To start with, let $y_* = h\left(t_{\rm min}^{(h)}(\mathbfcal{E})\right)$.
Then, according to the definition of $t_{\rm min}^{(h)}$ \CY{[Eq.~\eqref{Eq:thermalisation time} in the main text]} and $\mathcal{D}_t^{(h)}$ \CY{[Eq.~\eqref{Eq:ThermalisationModel} in the main text]}, we have that
\begin{align}
&\left\{y_*\mathcal{E}_{a|x}(\gamma) + (1-y_*){\rm tr}\left[\mathcal{E}_{a|x}(\gamma)\right]\gamma\right\}_{a,x}\in{\bf LHS};\\
&0< y_*\le1.
\end{align}
Note that $0<y_*$ is due to the fact that $h(t):[0,\infty)\to[0,1]$ is strictly decreasing and $\lim_{t\to\infty}h(t)=0$; namely, $h(t)>0$ $\forall\,t<\infty$.
By comparing with the optimisation used to define $2^{{\rm SR}_\gamma[\mathbfcal{E}(\gamma)]}$ \CY{[Eq.~\eqref{Eq:SRgamma} in the main text]}, we conclude that $1/y_*$ is a feasible solution.
This means that $
1/y_*\ge2^{{\rm SR}_\gamma[\mathbfcal{E}(\gamma)]}$; in other words,
\begin{align}\label{Eq: tmin < SRgamma}
h\left(t_{\rm min}^{(h)}(\mathbfcal{E})\right)=y_*\le2^{-{\rm SR}_\gamma[\mathbfcal{E}(\gamma)]}.
\end{align}
It remains to prove the opposite inequality.
First, if $2^{{\rm SR}_\gamma[\mathbfcal{E}(\gamma)]}=\infty$, we have 
$2^{-{\rm SR}_\gamma[\mathbfcal{E}(\gamma)]}=0\le h\left(t_{\rm min}^{(h)}(\mathbfcal{E})\right)$.
Hence, it suffices to consider $2^{{\rm SR}_\gamma[\mathbfcal{E}(\gamma)]}<\infty$. 
To do so, let $1/p_* = 2^{{\rm SR}_\gamma[\mathbfcal{E}(\gamma)]}$, where $0< p_*\le1$ (i.e., $1/p_*$ is finite).
Then, by the definition of $2^{{\rm SR}_\gamma[\mathbfcal{E}(\gamma)]}$ \CY{[Eq.~\eqref{Eq:SRgamma} in the main text]}, we have that
\begin{align}
\left\{p_*\mathcal{E}_{a|x}(\gamma) + (1-p_*){\rm tr}\left[\mathcal{E}_{a|x}(\gamma)\right]\gamma\right\}_{a,x}\in{\bf LHS}.
\end{align}
Now, recall that $h(t):[0,\infty)\to[0,1]$ is strictly decreasing, continuous, with $h(0)=1$ and $\lim_{t\to\infty}h(t)=0$.
This means that the function $h(t)$ is one-to-one and onto (i.e., a bijection) between the intervals $[0,\infty)$ and $(0,1]$, and its inverse function exists.
In particular, for $p_*\in(0,1]$, there exists a unique $t_*\in[0,\infty)$ such that
\begin{align}
h(t_*) = p_*.
\end{align}
Consequently, 
\begin{align}
\left\{h(t_*)\mathcal{E}_{a|x}(\gamma) + [1-h(t_*)]{\rm tr}\left[\mathcal{E}_{a|x}(\gamma)\right]\gamma\right\}_{a,x}\in{\bf LHS}.
\end{align}
In other words, $t_*$ is a feasible solution to the minimisation defining $t_{\rm min}^{(h)}(\mathbfcal{E})$ \CY{[Eq.~\eqref{Eq:thermalisation time} in the main text]}.
This implies that $t_*\ge t_{\rm min}^{(h)}(\mathbfcal{E})$.
Since $h(t)$ is strictly decreasing, we obtain
\begin{align}\label{Eq: tmin > SRgamma}
2^{-{\rm SR}_\gamma[\mathbfcal{E}(\gamma)]} = p_* = h(t_*)\le h\left(t_{\rm min}^{(h)}(\mathbfcal{E})\right).
\end{align}
The proof is completed by combining Eqs.~\eqref{Eq: tmin < SRgamma} and~\eqref{Eq: tmin > SRgamma}.
\end{proof}
}

\CYtwo{
\subsection*{Supplemental Material III: Proof of Result~\ref{result: environment result}}
}
\CYtwo{
Here, we state the complete version of Result~\ref{result: environment result} in the main text as the following theorem.
We use $d_S,d_E,d_{E'}$ to denote the dimensions of the systems $S,E,E'$.
Also, subscripts denote the corresponding (sub-)system, and $\ket{\Phi^+}_{SE}\coloneqq\sum_{i=0}^{d-1}(1/\sqrt{d})\ket{ii}_{SE}$ is a maximally entangled state in $SE$ with the given computational bases $\{\ket{i}_S\}_{i=0}^{d_S-1}, \{\ket{i}_E\}_{i=0}^{d_E-1}$.\\

\begin{result}\label{result: environment lemma}
Let $\mathbfcal{E}$ be a collection of instruments in a system $S$.
The following statements are equivalent:
\begin{enumerate}
\item\label{condition 1}
$\mathbfcal{E}$ is compatible.
\item\label{condition 2 plus}
$\{(\mathcal{E}_{a|x}\otimes\mathcal{I}_E)(\CYthree{{\gamma}_{SE}})\}_{a,x}\in{\bf LHS}$
for every auxiliary system $E$ with $d_E<\infty$ and \CYthree{${\gamma}_{SE}$} (which can be non-full-rank).
\item\label{condition 2}
$\{(\mathcal{E}_{a|x}\otimes\mathcal{I}_E)(\gamma_{SE})\}_{a,x}\in{\bf LHS}$
for every auxiliary system $E$ with $d_E\le d_S$ and full-rank $\gamma_{SE}$.
\item\label{condition 2 minus}
$\{(\mathcal{E}_{a|x}\otimes\mathcal{I}_E)(\gamma_{SE}^{(\epsilon)})\}_{a,x}\in{\bf LHS}$ $\forall\,0<\epsilon\le1$,
where $E$ is an auxiliary system with $d_E=d_S=d$, and 
\begin{align}\label{Eq:iso}
\gamma_{SE}^{(\epsilon)} \coloneqq (1-\epsilon)\proj{\Phi^+}_{SE} + \epsilon\id_{SE}/d^2.
\end{align}
\end{enumerate}
\end{result}
We remark that the one-parameter family of states defined in Eq.~\eqref{Eq:iso} is known as {\em isotropic states}~\cite{HorodeckiPRA1999}, which is equivalent to {\em Werner states}~\cite{WernerPRA} in two-qubit systems. 
They are of great importance in the study of nonlocality and steering due to their simple structure (see, e.g., Refs.~\cite{WernerPRA,HorodeckiPRA1999,HorodeckiPRA1999-2,Hsieh2016PRA,Quintino2016,CavalcantiPRA2013,Hsieh2020PRR,Hsieh2021PRXQ,ZhangPRL2024,RennerPRL2024,HsiehPRA(E)2018} for their applications).
Importantly, the above result implies that {\em full-rank} isotropic states form a universal family of states that can certify {\em all possible} incompatible instruments.
\begin{proof}
Using definitions of compatible instruments [Eq.~\eqref{Eq:instrument-JM} in the main text] and ${\bf LHS}$ [Eq.~\eqref{Eq:LHS-definition} in the main text], we conclude that statement~\ref{condition 1} $\Rightarrow$ statement~\ref{condition 2 plus} $\Rightarrow$ statement~\ref{condition 2} $\Rightarrow$ statement~\ref{condition 2 minus}.
Hence, it suffices to show that statement~\ref{condition 2 minus} implies statement~\ref{condition 1} to complete the proof.
Suppose statement~\ref{condition 2 minus} holds.
Using Eq.~\eqref{Eq:iso}, we have, for every $a,x$ and $0<\epsilon\le1$,
\begin{align}
(\mathcal{E}_{a|x}\otimes\mathcal{I}_E)\left(\gamma_{SE}^{(\epsilon)}\right) &= (1-\epsilon)(\mathcal{E}_{a|x}\otimes\mathcal{I}_E)\left(\proj{\Phi^+}_{SE}\right)\nonumber\\
&\quad+ \epsilon(\mathcal{E}_{a|x}\otimes\mathcal{I}_E)\left(\id_{SE}/d^2\right)
\end{align}
Hence, statement~\ref{condition 2 minus} implies that the state assemblage \mbox{$\{(\mathcal{E}_{a|x}\otimes\mathcal{I}_E)(\proj{\Phi^+}_{SE})\}_{a,x}$} can be arbitrarily close to the set ${\bf LHS}$ by setting $\epsilon\to0$.
In other words, it is in the closure of ${\bf LHS}$.
Since ${\bf LHS}$ is a compact set (and hence a closed set), it is identical to its closure. 
We conclude that
\begin{align}
\{(\mathcal{E}_{a|x}\otimes\mathcal{I}_E)(\proj{\Phi^+}_{SE})\}_{a,x}\in{\bf LHS}.
\end{align}
By the definition of ${\bf LHS}$ [Eq.~\eqref{Eq:LHS-definition} in the main text], there exist some (conditional) probability distributions $\{P_\lambda\}_\lambda,\{P(a|x,\lambda)\}_{a,x,\lambda}$ and bipartite states $\{\rho_{\lambda,SE}\}_\lambda$ in $SE$ such that
\begin{align}\label{Eq:computation choi 0001}
(\mathcal{E}_{a|x}\otimes\mathcal{I}_E)(\proj{\Phi^+}_{SE}) = \sum_\lambda P(a|x,\lambda)P_\lambda\rho_{\lambda,SE}.
\end{align}
For each $\lambda$, define the following linear map in $S$ (note that, as stated in statement~\ref{condition 2 minus}, we have $d_S=d_E=d$):
\begin{align}\label{Eq:G lambda def}
\mathcal{G}_\lambda(\cdot)\coloneqq {\rm tr}_{E'}\left([\id_S\otimes(\cdot)_{E'}]\rho_{\lambda,SE'}^{t_{E'}}\right)P_\lambda d,
\end{align}
where $(\cdot)^{t_{E'}}\coloneqq\sum_{n,m}\ket{n}\bra{m}_{E'}(\cdot)_{E'}\ket{n}\bra{m}_{E'}$ is the transpose operation in the system $E'$ (with a given computational basis $\{\ket{n}_{E'}\}_{n=0}^{d-1}$), and $E'$ is another auxiliary system with dimension $d$.
Then, a direct computation shows that
\begin{align}\label{Eq:G lambda Choi definition}
(\mathcal{G}_\lambda\otimes\mathcal{I}_E)(\proj{\Phi^+}_{SE})= P_\lambda\rho_{\lambda,SE}\quad\forall\,\lambda,
\end{align}
where we have used the fact $(\bra{\Phi^+}_{EE'}\otimes\id_S)(P_{SE}\otimes\id_{E'}) = (\bra{\Phi^+}_{EE'}\otimes\id_S)(\id_{E}\otimes P_{SE'}^{t_{E'}})$ which holds for every bipartite operator $P$.
Eq.~\eqref{Eq:G lambda Choi definition} is the so-called {\em Choi-Jamio\l kowski isomorphism}~\cite{Choi1975,Jamiolkowski1972}, which maps the map $\mathcal{G}_\lambda$ to a bipartite operator.
In this language, Eq.~\eqref{Eq:G lambda def} is the so-called {\em inverse Choi map}, as it maps a bipartite operator to a linear map.
Combining Eqs.~\eqref{Eq:computation choi 0001} and~\eqref{Eq:G lambda Choi definition}, we obtain
\begin{align}\label{Eq:computation choi 0002}
&\left(\mathcal{E}_{a|x}\otimes\mathcal{I}_E\right)(\proj{\Phi^+}_{SE})=\nonumber\\
&\quad\quad\left[\left(\sum_\lambda P(a|x,\lambda)\mathcal{G}_\lambda\right)\otimes\mathcal{I}_E\right](\proj{\Phi^+}_{SE})\quad\forall\,a,x.
\end{align}
Now, by the Choi-Jamio\l kowski isomorphism, two linear maps $\mathcal{N}_S,\mathcal{M}_S$ in $S$ are identical to each other if and only if $(\mathcal{N}_S\otimes\mathcal{I}_E)(\proj{\Phi^+}_{SE})=(\mathcal{M}_S\otimes\mathcal{I}_E)(\proj{\Phi^+}_{SE})$~\cite{Choi1975,Jamiolkowski1972}.
Hence, Eq.~\eqref{Eq:computation choi 0002} is equivalent to
\begin{align}\label{Eq: computation choi 0003}
\mathcal{E}_{a|x}=\sum_\lambda P(a|x,\lambda)\mathcal{G}_\lambda\quad\forall\,a,x.
\end{align}
It remains to show that $\{\mathcal{G}_\lambda\}_\lambda$ is a valid instrument.
By summing over $a$ in Eq.~\eqref{Eq: computation choi 0003}, we obtain
$
\sum_\lambda\mathcal{G}_\lambda = \sum_a\mathcal{E}_{a|x},
$
which is trace-preserving.
Hence, it remains to check whether $\mathcal{G}_\lambda$'s are trace-non-increasing and completely-positive.
By the Choi-Jamio\l kowski isomorphism, a linear map $\mathcal{N}_S$ in $S$ is trace-preserving if and only if ${\rm tr}_S[(\mathcal{N}_S\otimes\mathcal{I}_E)(\proj{\Phi^+}_{SE})] = \id_E/d$~\cite{Choi1975,Jamiolkowski1972} (see also, e.g., Theorem 2.1.1 in Ref.~\cite{Hsieh-thesis}).
With Eq.~\eqref{Eq:computation choi 0001}, we obtain
\begin{align}
{\rm tr}_S\left(\sum_\lambda P_\lambda\rho_{\lambda,SE}\right)& = {\rm tr}_S\left[\left(\sum_a\mathcal{E}_{a|x}\otimes\mathcal{I}_E\right)(\proj{\Phi^+}_{SE})\right]\nonumber\\
&= \id_E/d.
\end{align}
Together with Eq.~\eqref{Eq:G lambda Choi definition}, we conclude that, for each $\lambda$, 
\begin{align}
{\rm tr}_S\left[(\mathcal{G}_\lambda\otimes\mathcal{I}_E)(\proj{\Phi^+}_{SE})\right] = {\rm tr}_S(P_\lambda\rho_{\lambda,SE})\le\id_E/d.\label{Eq:TNI}
\end{align}
For an arbitrary input state $\omega$, a direct computation shows that
\begin{align}
{\rm tr}[\mathcal{G}_\lambda(\omega)] &= d{\rm tr}\left[(\id_S\otimes\omega_{E'}^{t_{E'}})(\mathcal{G}_\lambda\otimes\mathcal{I}_{E'})(\proj{\Phi^+}_{SE'})\right]\nonumber\\
&\le d{\rm tr}\left(\omega_{E'}^{t_{E'}}\times\id_{E'}/d\right) = {\rm tr}(\omega),
\end{align}
where we have used Eq.~\eqref{Eq:TNI} to obtain the inequality.
Hence, $\mathcal{G}_\lambda$'s are trace-non-increasing.
Finally, to show that $\mathcal{G}_\lambda$'s are completely-positive, for each $\lambda$, Eq.~\eqref{Eq:G lambda Choi definition} implies 
\begin{align}
(\mathcal{G}_\lambda\otimes\mathcal{I}_E)(\proj{\Phi^+}_{SE})\ge0.\label{Eq:CP}
\end{align}
Again, by the Choi-Jamio\l kowski isomorphism, a linear map $\mathcal{N}_S$ in $S$ is completely-positive if and only if \mbox{$(\mathcal{N}_S\otimes\mathcal{I}_E)(\proj{\Phi^+}_{SE})\ge0$}~\cite{Choi1975,Jamiolkowski1972} (see, e.g., Theorem 2.1.1 and Lemma A.1.1 in Ref.~\cite{Hsieh-thesis}).
Equation~\eqref{Eq:CP} means that $\mathcal{G}_\lambda$'s are completely-positive, and, consequently, $\{\mathcal{G}_\lambda\}_\lambda$ is indeed a valid instrument. 
Then, $\mathbfcal{E}$ is compatible since Eq.~\eqref{Eq: computation choi 0003} is a valid decomposition for the definition [Eq.~\eqref{Eq:instrument-JM} in the main text].
The proof is completed.
\end{proof}
}

\subsection*{\CYnew{Supplemental Material IV:} ${\rm SR}_\gamma$ as an SDP}\label{APP:SRgammaSDP}
First of all, for an arbitrarily given ${\bm\tau}\in{\bf LHS}$, we can write
\begin{align}
\tau_{a|x} = \sum_{i=1}^{|{\bf i}|}q_i D(a|x,i)\rho_i\quad\forall\,a,x,
\end{align}
which sums over all possible {\em deterministic probability distributions} $D(a|x,i)$, where each of them assigns exactly one output value $a$ for each input $x$~\cite{UolaRMP2020,Cavalcanti2016}.
In other words, each $D(a|x,i)$ is a deterministic mapping from \mbox{$x=0,1,...,|{\bf x}|-1$} to \mbox{$a=0,1,...,|{\bf a}|-1$}, and there are $|{\bf i}|\coloneqq|{\bf a}|^{|{\bf x}|}$ many of them.
By defining $\eta_i = q_i\rho_i$, we obtain a simple useful fact~\cite{UolaRMP2020,Cavalcanti2016}:
\begin{fact}\label{Eq:LHSfact}
Every ${\bm\tau}\in{\bf LHS}$ can be expressed by 
\begin{align}
\tau_{a|x} = \sum_{i=1}^{|{\bf i}|}D(a|x,i)\eta_i\quad\forall\,a,x
\end{align}
with $\eta_i\ge0$ and $\sum_{i=1}^{|{\bf i}|}{\rm tr}(\eta_i)=1$.
\end{fact}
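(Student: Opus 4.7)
The plan is to reduce the general LHS form in Eq.~\eqref{Eq:LHS-definition} to the claimed finite-sum form by exploiting the fact that every conditional probability distribution $P(a|x,\lambda)$ (for each fixed $\lambda$) is a convex combination of deterministic strategies. This is a standard convex-analytic fact: the set of conditional distributions $\{P(a|x)\}_{a,x}$ (with $x=0,\dots,|{\bf x}|-1$, $a=0,\dots,|{\bf a}|-1$) is a polytope whose vertices are precisely the $|{\bf i}|=|{\bf a}|^{|{\bf x}|}$ deterministic assignments $D(a|x,i)$, each fixing a single output $a$ for every input $x$.

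First I would start from an arbitrary ${\bm\tau}\in{\bf LHS}$ and write, by definition,
\begin{align}
\tau_{a|x}=\sum_\lambda P_\lambda P(a|x,\lambda)\rho_\lambda\quad\forall\,a,x.
\end{align}
Next I would, for each hidden variable value $\lambda$, apply the polytope decomposition
\begin{align}
P(a|x,\lambda)=\sum_{i=1}^{|{\bf i}|}q(i|\lambda)D(a|x,i),
\end{align}
with $q(i|\lambda)\ge0$ and $\sum_i q(i|\lambda)=1$. Substituting and exchanging the finite sums gives
\begin{align}
\tau_{a|x}=\sum_{i=1}^{|{\bf i}|}D(a|x,i)\Bigl(\sum_\lambda P_\lambda q(i|\lambda)\rho_\lambda\Bigr).
\end{align}
Then I would define
\begin{align}
\eta_i\coloneqq\sum_\lambda P_\lambda q(i|\lambda)\rho_\lambda,
\end{align}
which is a nonnegative linear combination of density operators, hence $\eta_i\ge0$. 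Finally, the normalisation follows by linearity of the trace together with $\sum_i q(i|\lambda)=1$, $\sum_\lambda P_\lambda=1$ and ${\rm tr}(\rho_\lambda)=1$:
\begin{align}
\sum_{i=1}^{|{\bf i}|}{\rm tr}(\eta_i)=\sum_\lambda P_\lambda\sum_i q(i|\lambda)\,{\rm tr}(\rho_\lambda)=1.
\end{align}

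No step here is really an obstacle; the only slightly non-obvious ingredient is the polytope decomposition of $P(a|x,\lambda)$ into deterministic vertices, which is a classical convex-analysis statement that one could either invoke directly or briefly justify by induction on $|{\bf x}|$. Everything else is a rearrangement of finite sums. The resulting expression matches the claim verbatim, completing the proof.
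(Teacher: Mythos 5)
Your proposal is correct and follows essentially the same route as the paper: the paper simply invokes the standard decomposition of an LHS assemblage over deterministic response functions (citing the steering reviews) and sets $\eta_i=q_i\rho_i$, whereas you explicitly supply the underlying vertex decomposition of each $P(a|x,\lambda)$ and absorb the hidden variable into $\eta_i\coloneqq\sum_\lambda P_\lambda q(i|\lambda)\rho_\lambda$. This is the same argument with the omitted intermediate step filled in, and your positivity and normalisation checks match the claim.
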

Using the above fact and \CYnew{Eq.~\eqref{Eq:SRgamma} in the main text}, we can write $2^{-{\rm SR}_\gamma({\bm\sigma})}$ into the following form [since now, we define $p_{a|x} \coloneqq {\rm tr}(\sigma_{a|x})$; also, $\sum_i$ is the abbreviation of $\sum_{i=1}^{|{\bf i}|}$]:
\begin{equation}\label{Eq:SRgammaSDPprimal}
\begin{split}
\max_{\{\eta_i\}_i,q}\quad&q\\
{\rm s.t.}\quad&0\le q\le1;\;\;\eta_i\ge0\;\forall\,i;\\
&\sum_i D(a|x,i)\eta_i+q(\gamma p_{a|x}-\sigma_{a|x}) = \gamma p_{a|x}\;\forall\,a,x.
\end{split}
\end{equation}
Note that the constraint $\sum_i{\rm tr}(\eta_i)=1$ is already guaranteed by taking trace and summing over $a$ of the constraint \mbox{$\sum_i D(a|x,i)\eta_i+q(\gamma p_{a|x}-\sigma_{a|x}) = \gamma p_{a|x}$.}
Now, define
\begin{align}
{\bf V}&\coloneqq\left(\bigoplus_i\eta_i\right)\oplus q,\nonumber\\
{\bf B}&\coloneqq\bigoplus_{a,x}\left(\gamma p_{a|x}\right),\nonumber\\
\Phi({\bf V})&\coloneqq\bigoplus_{a,x}\left[\sum_iD(a|x,i)\eta_i+q(\gamma p_{a|x}-\sigma_{a|x})\right],
\end{align}
where $\Phi$ is a Hermitian-preserving linear map~\cite{watrous_2018}. 
Then we have
\begin{equation}\label{Eq:SDPprimal}
\begin{split}
2^{-{\rm SR}_\gamma({\bm\sigma})} = \max_{{\bf V}}\quad&q\\
{\rm s.t.}\quad& \Phi({\bf V}) = {\bf B};\;\;q \le 1;\;\;{\bf V}\ge0.
\end{split}
\end{equation}
This is in the standard form of the primal problem of SDP, whose {\em dual problem} can be found as (see, e.g., Section 1.2.3 in Ref.~\cite{watrous_2018}; see also Ref.~\cite{SDP-textbook})
\begin{equation}\label{Eq:SDPdual}
\begin{split}
\min_{{\bf\Ydual},\omega}\quad&\sum_{a,x}{\rm tr}\left(\Ydual_{a|x}\gamma\right)p_{a|x}+\omega\\
{\rm s.t.}\quad&\Ydual_{a|x} = \Ydual_{a|x}^\dagger\;\forall\,a,x;\;\;\omega\ge0;\\
& \Phi^\dagger({\bf\Ydual})+\left(\bigoplus_i0\right)\oplus \omega \ge \left(\bigoplus_i0\right)\oplus 1,
\end{split}
\end{equation}
where ${\bf\Ydual} = \bigoplus_{a,x}\Ydual_{a|x}$.
It remains to find $\Phi^\dagger({\bf\Ydual})$.
To this end, note that we have
\begin{align}
\sum_{a,x}{\rm tr}\left(\Ydual_{a|x}\Phi({\bf V})_{a|x}\right)=&\sum_i{\rm tr}\left[\sum_{a,x}\Ydual_{a|x}D(a|x,i)\eta_i\right]\nonumber\\
&+ {\rm tr}\left[\sum_{a,x}\Ydual_{a|x}(\gamma p_{a|x}-\sigma_{a|x})\right]\times q.
\end{align}
Hence, using the definition of inner product, we conclude that
$
\Phi^\dagger({\bf\Ydual}) = \left[\bigoplus_i\left(\sum_{a,x}\Ydual_{a|x}D(a|x,i)\right)\right]\oplus{\rm tr}\left[\sum_{a,x}\Ydual_{a|x}(\gamma p_{a|x}-\sigma_{a|x})\right],
$
and we obtain the dual form of $2^{-{\rm SR}_\gamma({\bm\sigma})}$ as follows:
\begin{equation}\label{Eq:SRgammaSDPdual}
\begin{split} 
\min_{{\bf\Ydual},\omega}\quad&\sum_{a,x}{\rm tr}(\Ydual_{a|x}\gamma)p_{a|x} + \omega\\
{\rm s.t.}\quad& \Ydual_{a|x} = \Ydual_{a|x}^\dagger\;\forall\,a,x;\;\;\omega\ge0;\\
&\sum_{a,x}D(a|x,i)\Ydual_{a|x}\ge0\;\forall\,i;\\
&\sum_{a,x}{\rm tr}(\Ydual_{a|x}\gamma)p_{a|x}+\omega\ge\sum_{a,x}{\rm tr}(\Ydual_{a|x}\sigma_{a|x})+1.
\end{split}
\end{equation}
Note that this is {\em equal to} $2^{-{\rm SR}_\gamma({\bm\sigma})}$ since the {\em strong duality} holds~\cite{watrous_2018} --- this can be checked by noting that the primal problem is finite and feasible (e.g., by simply considering $q=0$), and the dual problem is strictly feasible; namely, by choosing $\omega>1$ and $\Ydual_{a|x} = \id$ for every $a,x$, one can achieve $\sum_{a,x}D(a|x,i)\Ydual_{a|x}>0\;\forall\,i$ and $\sum_{a,x}{\rm tr}(\Ydual_{a|x}\gamma)p_{a|x}+\omega>\sum_{a,x}{\rm tr}(\Ydual_{a|x}\sigma_{a|x})+1$ (see, e.g., Theorem~1.18 in Ref.~\cite{watrous_2018}).

\subsection*{\CYnew{Supplemental Material V:} Proof of \CYnew{Result~\ref{Result:ResourceMonotone}}}\label{App:Proof-Result:ResourceMonotone}
\begin{proof}
First, if $\mathbfcal{E}$ is compatible, one can directly conclude that $t_{\rm min}^{(h)}(\mathbfcal{E})=0$ by \CYnew{Eq.~\eqref{Eq:thermalisation time} in the main text}.
Hence, it suffices to show the non-increasing property under deterministic allowed operations. 
Note that the decomposition given in the definition of ${\rm SR}_\gamma[\mathbb{F}(\mathbfcal{E})(\gamma)]$ [i.e., \CYnew{Eq.~\eqref{Eq:SRgamma} in the main text}] can be rewritten as (in what follows, we define $q = 1/\SRvar$)
\begin{align}
&q\mathbb{F}(\mathbfcal{E})_{a|x}(\gamma) + (1-q){\rm tr}[\mathbb{F}(\mathbfcal{E})_{a|x}(\gamma)]\gamma\nonumber\\
&=q\sum_{b,y}P'(a|xyb)P(y|x)\CYnew{\mathcal{Q}}\circ\mathcal{E}_{b|y}\circ\mathcal{P}(\gamma)\nonumber\\
&\quad+(1-q)\sum_{b,y}P'(a|xyb)P(y|x){\rm tr}\left[\CYnew{\mathcal{Q}}\circ\mathcal{E}_{b|y}\circ\mathcal{P}(\gamma)\right]\gamma\nonumber\\
& = \sum_{b,y}P'(a|xyb)P(y|x)\CYnew{\mathcal{Q}}\biggl(q\mathcal{E}_{b|y}(\gamma) + (1-q){\rm tr}\left[\mathcal{E}_{b|y}(\gamma)\right]\gamma\biggr),
\end{align}
where we have used the trace-preserving as well as Gibbs-preserving properties of \CYnew{$\mathcal{P},\mathcal{Q}$}.
Consequently,
$2^{-{\rm SR}_\gamma[\mathbb{F}(\mathbfcal{E})(\gamma)]}$ can be expressed as
\begin{equation}\label{Eq:non-increasing-computation001}
\begin{split}
\max_{q,{\bm\omega}}\quad&q\\
{\rm s.t.}\quad&0\le q \le 1;\;\;{\bm\omega}\in{\bf LHS};\;\;\forall a,x:\\ &\omega_{a|x} = \sum_{b,y}P'(a|xyb)P(y|x)\\
&\quad\quad\;\times\CYnew{\mathcal{Q}}\biggl(q\mathcal{E}_{b|y}(\gamma) + (1-q){\rm tr}\left[\mathcal{E}_{b|y}(\gamma)\right]\gamma\biggr).
\end{split}
\end{equation}
Note that $\left\{\sum_{b,y}P'(a|xyb)P(y|x)\CYnew{\mathcal{Q}}(\tau_{b|y})\right\}_{a,x}\in{\bf LHS}$ if \mbox{${\bm\tau}\in{\bf LHS}$}, since it is an allowed operation of steering~\cite{Gallego2015PRX}. 
Hence, \CYnew{Eq.~\eqref{Eq:non-increasing-computation001}}
is lower bounded by
\begin{equation}\label{Eq:non-increasing-computation002}
\begin{split}
\max_{q,{\bm\tau}}\quad&q\\
{\rm s.t.}\quad&0\le q \le 1;\;\;{\bm\tau}\in{\bf LHS};\\ 
&\tau_{a|x} = q\mathcal{E}_{a|x}(\gamma) + (1-q){\rm tr}\left[\mathcal{E}_{a|x}(\gamma)\right]\gamma\,\forall a,x.
\end{split}
\end{equation}
This is because a feasible solution $(q,{\bm\tau})$ of Eq.~\eqref{Eq:non-increasing-computation002} can always induce a feasible solution $(q,{\bm\omega})$ of Eq.~\eqref{Eq:non-increasing-computation001} with the same objective function (i.e., $q$).
Since Eq.~\eqref{Eq:non-increasing-computation002} is exactly $2^{-{\rm SR}_\gamma[\mathbfcal{E}(\gamma)]}$, we conclude that $2^{-{\rm SR}_\gamma[\mathbb{F}(\mathbfcal{E})(\gamma)]}\ge2^{-{\rm SR}_\gamma[\mathbfcal{E}(\gamma)]}$,
namely, we obtain
\begin{align}
{\rm SR}_\gamma[\mathbb{F}(\mathbfcal{E})(\gamma)]\le{\rm SR}_\gamma[\mathbfcal{E}(\gamma)].
\end{align}
Finally, by using Eqs.~\eqref{Eq:thermalisation time},~\eqref{Eq:SRgamma} in the main text and the fact that \CYnew{the inverse function} $h^{-1}(y):[0,1]\to\mathbb{R}$ is again strictly decreasing, we conclude that
\begin{align}\label{Eq:tmin non-increasing allowed op}
t_{\rm min}^{(h)}[\mathbb{F}(\mathbfcal{E})] &= h^{-1}\left(2^{-{\rm SR}_\gamma[\mathbb{F}(\mathbfcal{E})(\gamma)]}\right)\nonumber\\
&\le h^{-1}\left(2^{-{\rm SR}_\gamma[\mathbfcal{E}(\gamma)]}\right) = t_{\rm min}^{(h)}(\mathbfcal{E}).
\end{align}
The proof is thus completed.
\end{proof}

\CYthree{As a remark, deterministic allowed operations are closed under composition.
To see this, consider two deterministic allowed operations (specified by $i=1,2$)
\begin{align}
\mathbb{F}_{(i)}(\mathbfcal{E})_{a|x}\coloneqq\sum_{b,y}P_{(i)}'(a|xyb)P_{(i)}(y|x)\mathcal{Q}_{(i)}\circ\mathcal{E}_{b|y}\circ\mathcal{P}_{(i)}.
\end{align} 
Then we have, for every $\mathbfcal{E}$,
\begin{align}
&\left[\left(\mathbb{F}_{(2)}\circ\mathbb{F}_{(1)}\right)(\mathbfcal{E})\right]_{a|x}\nonumber\\
&=\sum_{c,z}\sum_{b,y}P_{(2)}'(a|xyb)P_{(1)}'(b|yzc)P_{(1)}(z|y)P_{(2)}(y|x)\nonumber\\
&\quad\quad\quad\quad\times\mathcal{Q}_{(2)}\circ\mathcal{Q}_{(1)}\circ\mathcal{E}_{c|z}\circ\mathcal{P}_{(1)}\circ\mathcal{P}_{(2)}\nonumber\\
&=\sum_{c,z}P'_{\rm sum}(a|xzc)P_{\rm sum}(z|x)\mathcal{Q}_{\rm sum}\circ\mathcal{E}_{c|z}\circ\mathcal{P}_{\rm sum},
\end{align}
where $\mathcal{Q}_{\rm sum}\coloneqq\mathcal{Q}_{(2)}\circ\mathcal{Q}_{(1)}$, $\mathcal{P}_{\rm sum}\coloneqq\mathcal{P}_{(1)}\circ\mathcal{P}_{(2)}$, and
\begin{align}
P_{\rm sum}(z|x)&\coloneqq\sum_{y}P_{(1)}(z|y)P_{(2)}(y|x),\\
P'_{\rm sum}(a|xzc)&\coloneqq\sum_{b,y}\frac{P_{(2)}'(a|xyb)P_{(1)}'(b|yzc)P_{(1)}(z|y)P_{(2)}(y|x)}{P_{\rm sum}(z|x)}
\end{align}
are both valid conditional probability distributions.
Since Gibbs-preserving channels are closed under composition, 
$\mathbb{F}_{(2)}\circ\mathbb{F}_{(1)}$ is again a deterministic allowed operation.
}

\subsection*{\CYnew{Supplemental Material VI:} Proof of \CYnew{Result~\ref{Result:No-extension}}}\label{App:Proof-Result:No-extension}
\begin{proof}
First, let us write ${\bm\sigma} = \mathbfcal{E}(\gamma)$.
Since $\mathbfcal{E}$'s average channel is Gibbs-preserving, we have $\sum_a\mathcal{E}_{a|x}(\gamma) = \gamma$.
Namely, ${\bm\sigma}$'s reduced state is $\gamma$.
Suppose that there exists an allowed stochastic operation (i.e., an ${\rm LF_1}$ filter with the Kraus operator $K$) achieving 
\begin{align}
\sigma_{a|x}\mapsto\omega_{a|x} \coloneqq K\sigma_{a|x} K^\dagger / p_\gamma,
\end{align}
where, again, $p_\gamma = {\rm tr}\left(K\gamma K^\dagger\right)$.
Since it is a stochastic allowed operation, we have the following two conditions [corresponding to conditions (i) and (ii) in the main text]:
\begin{align}\label{Eq:condition(i)}
&{\rm tr}(\sigma_{a|x}) = {\rm tr}(K\sigma_{a|x}K^\dagger)/p_\gamma = {\rm tr}(\omega_{a|x})\quad\forall\,a,x;\\
&K\gamma K^\dagger/p_\gamma = \gamma.\label{Eq:condition(ii)}
\end{align}
From here, we observe that ${\bm\omega}$'s reduced state is $\gamma$.
Now, we recall a recent result from stochastic steering distillation: 
\begin{lemma}{\em\cite{Ku2022NC,Hsieh2023}}
Consider two state assemblages ${\bm\kappa},{\bm\tau}$.
If ${\bm\kappa}$ can be converted to ${\bm\tau}$ by some ${\rm LF_1}$ filter, then there exists a unitary $U$ with ${\rm supp}(\rho_{\bm\tau})\subseteq{\rm supp}(U\rho_{\bm\kappa}U^\dagger)$ such that
\begin{align}
\tau_{a|x} = \sqrt{\rho_{\bm\tau}}U\sqrt{\rho_{\bm\kappa}}^{\;-1}\kappa_{a|x}\sqrt{\rho_{\bm\kappa}}^{\;-1}U^\dagger\sqrt{\rho_{\bm\tau}}\quad\forall\,a,x,
\end{align}
where $\rho_{\bm\kappa}$ ($\rho_{\bm\tau}$) is the reduced state of ${\bm\kappa}$ (${\bm\tau}$).
\end{lemma}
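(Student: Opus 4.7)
The plan is to recover the explicit form of the filter's Kraus operator $K$ via polar decomposition applied to a suitably chosen auxiliary operator, and then substitute this form into the conversion identity $\tau_{a|x}=K\kappa_{a|x}K^\dagger/p$. Let $p\coloneqq{\rm tr}(K\rho_{\bm\kappa}K^\dagger)>0$ denote the filter's success probability on the reduced state (strict positivity is implicit, since otherwise the conversion is undefined). Summing the hypothesis over $a$ produces the reduced-state identity $\rho_{\bm\tau}=K\rho_{\bm\kappa}K^\dagger/p$, which is the main structural input for everything below.

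The central step is to introduce $A\coloneqq K\sqrt{\rho_{\bm\kappa}}/\sqrt{p}$, so that $AA^\dagger=K\rho_{\bm\kappa}K^\dagger/p=\rho_{\bm\tau}$. Applying the left polar decomposition in the form $A=\sqrt{AA^\dagger}\,W=\sqrt{\rho_{\bm\tau}}\,W$ with $W$ a partial isometry, and then extending $W$ to a unitary $U$ on the full finite-dimensional Hilbert space, gives $K\sqrt{\rho_{\bm\kappa}}=\sqrt{p}\,\sqrt{\rho_{\bm\tau}}\,U$. Because each $\kappa_{a|x}\le\rho_{\bm\kappa}$ in the operator order, one has ${\rm supp}(\kappa_{a|x})\subseteq{\rm supp}(\rho_{\bm\kappa})$, so the action of $K$ on the summands $\kappa_{a|x}$ is fully captured by $K|_{{\rm supp}(\rho_{\bm\kappa})}=\sqrt{p}\,\sqrt{\rho_{\bm\tau}}\,U\,\sqrt{\rho_{\bm\kappa}}^{\;-1}$, where the inverse is read as the Moore--Penrose pseudo-inverse on ${\rm supp}(\rho_{\bm\kappa})$. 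Plugging this into $\tau_{a|x}=K\kappa_{a|x}K^\dagger/p$ immediately yields the claimed formula.

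For the support inclusion, I would substitute $\kappa_{a|x}\mapsto\rho_{\bm\kappa}$ (equivalently, sum the derived formula over $a$) to obtain $\rho_{\bm\tau}=\sqrt{\rho_{\bm\tau}}\,U\,\Pi\,U^\dagger\sqrt{\rho_{\bm\tau}}$, where $\Pi$ denotes the projector onto ${\rm supp}(\rho_{\bm\kappa})$. Reading matrix elements in an eigenbasis of $\rho_{\bm\tau}$ forces $U\Pi U^\dagger$ to act as the identity on ${\rm supp}(\rho_{\bm\tau})$, which is equivalent to ${\rm supp}(\rho_{\bm\tau})\subseteq U\,{\rm supp}(\rho_{\bm\kappa})={\rm supp}(U\rho_{\bm\kappa}U^\dagger)$, as claimed.

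The main obstacle is the bookkeeping for rank-deficient $\rho_{\bm\kappa}$: the pseudo-inverse $\sqrt{\rho_{\bm\kappa}}^{\;-1}$ is only defined on ${\rm supp}(\rho_{\bm\kappa})$, so the identity $K=\sqrt{p}\,\sqrt{\rho_{\bm\tau}}\,U\,\sqrt{\rho_{\bm\kappa}}^{\;-1}$ need not hold on the orthogonal complement, and one must verify that only the restriction of $K$ to ${\rm supp}(\rho_{\bm\kappa})$ ever enters the formula---which is automatic from the containment ${\rm supp}(\kappa_{a|x})\subseteq{\rm supp}(\rho_{\bm\kappa})$. A secondary technical point is that the partial isometry $W$ coming from polar decomposition extends to a global unitary $U$; this is standard in finite dimensions once one observes that the ranks of $A^\dagger A$ and $AA^\dagger$ coincide, so that the two deficient subspaces have matching dimensions and can be identified by any chosen isometry.
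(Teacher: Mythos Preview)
The paper does not actually prove this lemma: it is quoted from the cited references \cite{Ku2022NC,Hsieh2023} and then used as a black box in the proof of Result~\ref{Result:No-extension}. So there is no ``paper's own proof'' to compare against.

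That said, your argument is correct and is precisely the standard polar-decomposition route used in those references. The key identity $K\sqrt{\rho_{\bm\kappa}}=\sqrt{p}\,\sqrt{\rho_{\bm\tau}}\,U$, obtained from the left polar decomposition of $A=K\sqrt{\rho_{\bm\kappa}}/\sqrt{p}$, together with the observation that $\kappa_{a|x}=\sqrt{\rho_{\bm\kappa}}\bigl(\sqrt{\rho_{\bm\kappa}}^{\,-1}\kappa_{a|x}\sqrt{\rho_{\bm\kappa}}^{\,-1}\bigr)\sqrt{\rho_{\bm\kappa}}$ holds because ${\rm supp}(\kappa_{a|x})\subseteq{\rm supp}(\rho_{\bm\kappa})$, gives the formula cleanly. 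Your treatment of the rank-deficient case via the Moore--Penrose pseudo-inverse and the extension of the partial isometry to a full unitary is also handled correctly; the support-inclusion step from $\rho_{\bm\tau}=\sqrt{\rho_{\bm\tau}}\,U\Pi U^\dagger\sqrt{\rho_{\bm\tau}}$ is the right way to close the argument.
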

Setting ${\bm\kappa}={\bm\sigma}$ and ${\bm\tau}={\bm\omega}$ in the above lemma, we obtain
\begin{align}\label{Eq:omega sigma relation}
\omega_{a|x} = \sqrt{\gamma}U\sqrt{\gamma}^{\;-1}\sigma_{a|x}\sqrt{\gamma}^{\;-1}U^\dagger\sqrt{\gamma}\quad\forall\,a,x,
\end{align}
for some unitary $U$ (note that we always assume $\gamma$ is full-rank).
This means that, after the given ${\rm LF_1}$ filter,
\begin{align}\label{Eq:no-go proof final computation}
&2^{-{\rm SR}_\gamma({\bm\omega})} = \max\left\{0\le q\le1\,\middle|\,q\omega_{a|x} + (1-q){\rm tr}(\omega_{a|x})\gamma\in{\bf LHS}\right\}\nonumber\\
&\ge\max\left\{0\le q\le1\,\middle|\,q\sigma_{a|x} + (1-q){\rm tr}(\omega_{a|x})\gamma\in{\bf LHS}\right\}\nonumber\\
&=\max\left\{0\le q\le1\,\middle|\,q\sigma_{a|x} + (1-q){\rm tr}(\sigma_{a|x})\gamma\in{\bf LHS}\right\}\nonumber\\
&= 2^{-{\rm SR}_\gamma({\bm\sigma})}.
\end{align}
To see the inequality, note that if the state assemblage \mbox{$q\sigma_{a|x} + (1-q){\rm tr}(\omega_{a|x})\gamma\in{\bf LHS}$}, 
then one can check that 
\begin{align}
&q\omega_{a|x} + (1-q){\rm tr}(\omega_{a|x})\gamma=\nonumber\\
&\;\left(\sqrt{\gamma}U\sqrt{\gamma}^{\;-1}\right)\left[q\sigma_{a|x} + (1-q){\rm tr}(\omega_{a|x})\gamma\right]\left(\sqrt{\gamma}^{\;-1}U^\dagger\sqrt{\gamma}\right)
\end{align}
is again in ${\bf LHS}$~\cite{Ku2022NC,Hsieh2023}.
Hence, in Eq.~\eqref{Eq:no-go proof final computation}, the maximisation in the second line is sub-optimal to the first one.
We thus obtain ${\rm SR}_\gamma({\bm\omega})\le{\rm SR}_\gamma({\bm\sigma})$; that is, stochastic allowed operations cannot increase ${\rm SR}_\gamma$.
Finally, by using the same argument as in Eq.~\eqref{Eq:tmin non-increasing allowed op}, we conclude that $t_{\rm min}^{(h)}(\mathbfcal{E})$ cannot be increased.
\end{proof}

\CYthree{As a remark, stochastic allowed operations are closed under composition.
To see this, consider a given ${\bm\sigma}=\{\sigma_{a|x}\}_{a,x}$ with $\sum_a\sigma_{a|x}=\gamma$.
Then sequentially applying two stochastic allowed operations described by $K_{(1)},K_{(2)}$ gives
\begin{align}
\sigma_{a|x}&\stackrel{K_{(1)}}{\mapsto}\sigma_{a|x}^{(1)}\coloneqq K_{(1)}\sigma_{a|x} K_{(1)}^\dagger/p_{(1)}\nonumber\\
&\stackrel{K_{(2)}}{\mapsto}\sigma_{a|x}^{(2)}\coloneqq K_{(2)}K_{(1)}\sigma_{a|x} \left(K_{(2)}K_{(1)}\right)^\dagger/p_{(1)}p_{(2)},
\end{align}
where filters' success probabilities are $p_{(1)} = {\rm tr}\left(K_{(1)}\gamma K_{(1)}^\dagger\right)$ and $p_{(2)} = {\rm tr}\left(K_{(2)}\sum_a\sigma_{a|x}^{(1)} K_{(2)}^\dagger\right) = {\rm tr}\left(K_{(2)}\gamma K_{(2)}^\dagger\right)$ [Eq.~\eqref{Eq:condition(ii)} for $K_{(1)}$ is used].
Note that stochastic allowed operation's definition [more precisely, Eq.~\eqref{Eq:condition(i)}, i.e., condition (i) in the main text] can be input-dependent. 
Here, $K_{(1)}$ is for the input ${\bm\sigma}$, and $K_{(2)}$ is for the input $\{\sigma_{a|x}^{(1)}\}_{a,x}$.
Hence, Eq.~\eqref{Eq:condition(i)} reads
\begin{align}\label{Eq:for condition (i)}
{\rm tr}(\sigma_{a|x}) = {\rm tr}\left(\sigma_{a|x}^{(1)}\right)
\;\;\&\;\;{\rm tr}\left(\sigma_{a|x}^{(1)}\right) = {\rm tr}\left(\sigma_{a|x}^{(2)}\right)\quad\forall\,a,x.
\end{align}
Now, define a new filter $K_{\rm sum}\coloneqq K_{(2)}K_{(1)}$.
Using Eq.~\eqref{Eq:condition(ii)} for $K_{(1)}$ and $K_{(2)}$, we obtain
\begin{align}\label{Eq:for condition (ii)}
\sum_a\sigma_{a|x}^{(2)}=K_{(2)}K_{(1)}\gamma\left(K_{(2)}K_{(1)}\right)^\dagger/p_{(1)}p_{(2)} = \gamma.
\end{align}
Hence, $K_{\rm sum}$'s success probability (denoted by $p_{\rm sum}$) reads
$
p_{\rm sum} = {\rm tr}\left(K_{\rm sum}\gamma K_{\rm sum}^\dagger\right) = p_{(1)}p_{(2)},
$
and the composition of $K_{(1)},K_{(2)}$ is equivalent to the single filter $K_{\rm sum}$; namely,
\begin{align}
\sigma_{a|x}^{(2)} = K_{\rm sum}\sigma_{a|x} K_{\rm sum}^\dagger/p_{\rm sum}\quad\forall\,a,x.
\end{align}
Finally, Eq.~\eqref{Eq:for condition (i)} means that $K_{\rm sum}$ satisfies Eq.~\eqref{Eq:condition(i)} for the input ${\bm\sigma}$, and Eq.~\eqref{Eq:for condition (ii)} implies that Eq.~\eqref{Eq:condition(ii)} is also satisfied.
This means conditions (i) and (ii) in the main text are satisfied, and $K_{\rm sum}$ is a stochastic allowed operation for the input ${\bm\sigma}$.
}

\subsection*{\CYnew{Supplemental Material VII: Relative interior and Result~\ref{Result:Strong-no-go}}}\label{Proof-Result:Strong-no-go}
\subsubsection*{Relative interior of ${\bf LHS}$}
The {\em relative interior} of ${\bf LHS}$ can be written by~\cite{Bertsekas_2009}
\begin{align}\label{Eq:relintLHS-original}
&{\rm relint}({\bf LHS})\coloneqq\nonumber\\
&\left\{{\bm\tau}\in{\bf LHS}\;\middle|\;\exists\;\epsilon>0\;\text{s.t.}\;\mathcal{B}({\bm\tau};\epsilon)\cap{\rm aff}({\bf LHS})\subseteq{\bf LHS}\right\},
\end{align}
where 
\begin{align}
\mathcal{B}({\bm\tau};\epsilon)\coloneqq\left\{\{A_{a|x}\}_{a,x}\;\middle|\;\sum_{a,x}\norm{A_{a|x} - \tau_{a|x}}_1<\epsilon\right\}
\end{align}
is an {\em open ball} centring at ${\bm\tau}$ with radius $\epsilon$.
Here, $A_{a|x}$'s are some Hermitian operators, and we use the metric 
\CYthree{$\sum_{a,x}\norm{A_{a|x} - B_{a|x}}_1$} defined in Ref.~\cite{Hsieh2022}, which induces a metric topology. 
Also, the {\em affine hull} of the set ${\bf LHS}$ is given by
\begin{align}\label{Eq:affineLHS}
&{\rm aff}({\bf LHS})\coloneqq\nonumber\\
&\left\{\sum_{k=1}^{K}a_k{\bm\tau}_k\;\middle|\;K\in\mathbb{N}, a_k\in\mathbb{R}, {\bm\tau}_k\in{\bf LHS}, \sum_{k=1}^Ka_k=1\right\},
\end{align}
where $\alpha{\bm\sigma}+\beta{\bm\tau}\coloneqq\{\alpha\sigma_{a|x}+\beta\tau_{a|x}\}_{a,x}$.
Then we have:
\begin{lemma}\label{Eq:gamma_in_relint}
Let ${\bm\sigma}$ be a state assemblage with reduced state $\gamma$.
Suppose that $\gamma$ is full-rank and $p_{a|x}\coloneqq{\rm tr}(\sigma_{a|x})>0$ $\forall\,a,x$.
Then 
$
\{p_{a|x}\gamma\}_{a,x}\in{\rm relint}({\bf LHS}).
$
\end{lemma}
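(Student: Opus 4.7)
The plan is to verify the standard line-segment characterisation of the relative interior: for a convex set $C$ in a finite-dimensional space, a point $x \in C$ belongs to ${\rm relint}(C)$ if and only if for every $y \in C$ there exists $\alpha > 1$ with $\alpha x + (1 - \alpha) y \in C$. I will apply this to ${\bm\sigma} \coloneqq \{p_{a|x} \gamma\}_{a,x}$, working throughout in the deterministic-strategy parametrisation of Fact~\ref{Eq:LHSfact}. The first step is to produce a canonical LHS decomposition of ${\bm\sigma}$ in which every hidden-state weight is a strictly positive multiple of $\gamma$: since $\sum_a p_{a|x} = {\rm tr}(\gamma) = 1$ and $p_{a|x} > 0$ by hypothesis, the conditional distribution $\{p_{a|x}\}$ lies in the relative interior of the product-of-simplices polytope whose vertices are precisely the deterministic response functions $\{D(a|x,i)\}_{i=1}^{|{\bf i}|}$. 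A standard polytope argument then yields weights $q_i > 0$ (with $\sum_i q_i = 1$) satisfying $p_{a|x} = \sum_i D(a|x,i) q_i$, so that $\eta_i \coloneqq q_i \gamma$ gives the desired decomposition via Fact~\ref{Eq:LHSfact}.

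Next, for an arbitrary ${\bm\tau} \in {\bf LHS}$, Fact~\ref{Eq:LHSfact} supplies $\tau_{a|x} = \sum_i D(a|x,i) \xi_i$ with $\xi_i \succeq 0$ and $\sum_i {\rm tr}(\xi_i) = 1$. Combining the two decompositions,
\begin{align*}
\alpha p_{a|x} \gamma + (1 - \alpha) \tau_{a|x} = \sum_i D(a|x,i) \bigl[\alpha q_i \gamma - (\alpha - 1)\xi_i\bigr],
\end{align*}
and the trace normalisation $\alpha - (\alpha - 1) = 1$ is automatic, so Fact~\ref{Eq:LHSfact} reduces the problem to positive semidefiniteness of each bracketed operator. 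Full-rankness of $\gamma$ gives $\gamma \succeq \lambda_{\min}(\gamma) \id$ with $\lambda_{\min}(\gamma) > 0$, while ${\rm tr}(\xi_i) \le 1$ forces $\xi_i \preceq \id \preceq \lambda_{\min}(\gamma)^{-1} \gamma$. Hence $\alpha q_i \gamma - (\alpha - 1)\xi_i \succeq \bigl[\alpha q_i \lambda_{\min}(\gamma) - (\alpha - 1)\bigr] \lambda_{\min}(\gamma)^{-1} \gamma$, which is positive whenever $\alpha q_i \lambda_{\min}(\gamma) \ge \alpha - 1$. Since every $q_i > 0$, this inequality is satisfied uniformly over the finitely many $i$'s for all $\alpha$ sufficiently close to $1$ from above, yielding the required $\alpha > 1$ and completing the verification.

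The main delicate point is the first step---producing a decomposition of $\{p_{a|x}\}$ with \emph{all} $q_i > 0$. This is precisely where the hypothesis $p_{a|x} > 0$ $\forall\,a,x$ enters, and it is what supplies the classical positivity slack that is then converted, via full-rankness of $\gamma$, into the operator inequality in the second step. Were any $p_{a|x}$ to vanish, the corresponding $q_i$'s would necessarily be zero, the canonical weight $\eta_i = q_i \gamma$ would collapse, and the finite margin in $\alpha$ relied upon above would disappear.
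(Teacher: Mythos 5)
Your proof is correct and takes essentially the same route as the paper's: both rest on the line-segment characterisation of the relative interior of a convex set, and both convert the strict positivity of the $p_{a|x}$ together with $\gamma\ge\mu_{\rm min}(\gamma)\id$, $\mu_{\rm min}(\gamma)>0$, into enough operator-level slack to subtract an arbitrary LHS member term by term in the deterministic-strategy decomposition of Fact~\ref{Eq:LHSfact}. The only step you leave implicit---the existence of weights $q_i>0$ with $p_{a|x}=\sum_i D(a|x,i)q_i$---is standard and can be made explicit by taking $q_i=\prod_x p_{f_i(x)|x}$ for the deterministic function $f_i$ underlying $D(\cdot|\cdot,i)$; the paper distributes the same slack differently, by splitting off a uniform component $\epsilon/|{\bf a}|$ and invoking the counting identity $\sum_i D(a|x,i)=|{\bf i}|/|{\bf a}|$.
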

\begin{proof}
By Fact~\ref{Eq:LHSfact}, every \mbox{${\bm\kappa}\in{\bf LHS}$} can be expressed by
\mbox{$
\kappa_{a|x} = \sum_{i=1}^{|{\bf i}|}D(a|x,i)\eta_i\;\forall\,a,x
$} with some $\eta_i\ge0$ satisfying $\sum_i{\rm tr}(\eta_i)=1$, and there are $|{\bf i}|\coloneqq|{\bf a}|^{|{\bf x}|}$ many of deterministic probability distributions $D(a|x,i)$'s.
Now, we write
\begin{align}
p_{a|x} = (1-\epsilon)\widetilde{p}_{a|x} + \epsilon/|{\bf a}|,
\end{align}
where 
$
\widetilde{p}_{a|x}\coloneq(p_{a|x}-\epsilon/|{\bf a}|)/(1-\epsilon)
$
is a valid probability distribution with a small enough $\epsilon>0$ since we have $p_{a|x}>0$ $\forall\;a,x$.
Now, for $0<s<1$, we have
\begin{align}\label{Eq:Proof-no-go-001}
&p_{a|x}\gamma - s\kappa_{a|x} = (1-\epsilon)\widetilde{p}_{a|x}\gamma + \epsilon\gamma/|{\bf a}| - s\sum_{i=1}^{|{\bf i}|}D(a|x,i)\eta_i\nonumber\\
&= (1-\epsilon)\widetilde{p}_{a|x}\gamma + \sum_{i=1}^{|{\bf i}|}D(a|x,i)\left(\epsilon\gamma/|{\bf i}| - s\eta_i\right),
\end{align}
where, in the second line, we use the relation 
\begin{align}
\sum_{i=1}^{|{\bf i}|}D(a|x,i) = |{\bf a}|^{|{\bf x}|-1}= |{\bf i}|/|{\bf a}|\quad\forall\,a,x.
\end{align}
To prove this equality, note that $\sum_{i=1}^{|{\bf i}|}D(a|x,i)$ is the number of deterministic probability distributions that assign $a$ when the input is $x$.
For the given input $x$, the deterministic outcome $a$ has been assigned; for each other input, we have $|{\bf a}|$ many possible outcomes.
Hence, there are $|{\bf a}|^{|{\bf x}|-1}$ many such deterministic probability distributions.
Since $\gamma$ is full-rank, we have $\gamma\ge\mu_{\rm min}(\gamma)\id>0$, where $\mu_{\rm min}(\gamma)>0$ is the smallest eigenvalue of $\gamma$.
Then we choose
\begin{align}\label{Eq:the interval}
0<s<\min\left\{\epsilon;\frac{\epsilon\mu_{\rm min}(\gamma)}{|{\bf i}|\max_j\norm{\eta_j}_\infty}\right\}.
\end{align}
Note that $\epsilon\mu_{\rm min}(\gamma)/|{\bf i}|\max_j\norm{\eta_j}_\infty>0$ and this interval is non-vanishing.
For these value $s$ and every $i$, we obtain
\begin{align}
s\eta_i\le s\max_j\norm{\eta_j}_\infty\id\le\epsilon\mu_{\rm min}(\gamma)\id/|{\bf i}|\le\epsilon\gamma/|{\bf i}|.
\end{align}
Define the following operators for each $i$:
\begin{align}
W_i\coloneqq\frac{\epsilon\gamma/|{\bf i}|-s\eta_i}{\epsilon-s},
\end{align}
which is well-defined since $\epsilon-s>0$ for every $s$ in the interval that we set [i.e., Eq.~\eqref{Eq:the interval}].
Then we have $W_i\ge0$ $\forall\,i$ and $\sum_{i=1}^{|{\bf i}|}{\rm tr}(W_i) = 1$.
By Fact~\ref{Eq:LHSfact}, we conclude that the state assemblage 
$\omega_{a|x} \coloneqq \sum_{i=1}^{|{\bf i}|}D(a|x,i)W_i$ is in ${\bf LHS}$.
Substituting everything back to Eq.~\eqref{Eq:Proof-no-go-001}, we obtain
\begin{align}\label{Eq:Proof-no-go-002}
\frac{p_{a|x}\gamma - s\kappa_{a|x}}{1-s} = \left(\frac{1-\epsilon}{1-s}\right)\widetilde{p}_{a|x}\gamma + \left(\frac{\epsilon-s}{1-s}\right)\omega_{a|x},
\end{align}
which is a convex mixture of two LHS state assemblages, meaning that $(p_{a|x}\gamma - s\kappa_{a|x})/(1-s)\in{\bf LHS}$ due to the convexity of ${\bf LHS}$.
Finally, since ${\bf LHS}$ is convex, its relative interior can be written by~\cite{Bertsekas_2009}
\begin{align}\label{Eq:relintLHS}
{\rm relint}({\bf LHS})=\{{\bm\tau}\;|\;&\forall\,{\bm\kappa}\in{\bf LHS},\exists\,0<s<1\nonumber\\
&\;\text{s.t.}\;({\bm\tau} - s{\bm\kappa})/(1-s)\in{\bf LHS}\}.
\end{align}
The proof is thus concluded since our argument works for every ${\bm\kappa}\in{\bf LHS}$.
\end{proof}

\subsubsection*{Proof of \CYnew{Result~\ref{Result:Strong-no-go}}}
\CYnew{To prove Result~\ref{Result:Strong-no-go} in the main text,} we first point out the following fact:
\begin{fact}\label{fact:affineLHS}
For every state assemblage ${\bm\sigma}$, there exist two state assemblages ${\bm\tau},{\bm\omega}$ in ${\bf LHS}$ and $0<p<1$ such that
\begin{align}
{\bm\sigma} = \frac{{\bm\tau} + (p-1){\bm\omega}}{p}.
\end{align}
In other words, every state assemblage is in ${\rm aff}({\bf LHS})$.
\end{fact}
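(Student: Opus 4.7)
The plan is to reformulate the claim: since ${\bm\sigma}=({\bm\tau}+(p-1){\bm\omega})/p$ is equivalent to ${\bm\tau}=p{\bm\sigma}+(1-p){\bm\omega}$, it suffices to produce some ${\bm\omega}\in{\bf LHS}$ and $p\in(0,1)$ for which $p{\bm\sigma}+(1-p){\bm\omega}$ is again in ${\bf LHS}$. I will take ${\bm\omega}\coloneqq\{(1/|{\bf a}|)\,\id/d\}_{a,x}$, which lies in ${\bf LHS}$ because the uniform scalar distribution factorises as $1/|{\bf a}|=\sum_i D(a|x,i)(1/|{\bf i}|)$; this is exactly Fact~\ref{Eq:LHSfact} with $\eta_i=(1/|{\bf i}|)\,\id/d\ge0$ and $\sum_i{\rm tr}(\eta_i)=1$. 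The nontrivial part is to construct a matching ${\bm\tau}$.

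The key step is to show that every state assemblage admits a \emph{signed} decomposition $\sigma_{a|x}=\sum_i D(a|x,i)\,Z_i$ with Hermitian (but generally indefinite) operators $Z_i$. To justify this, fix a Hermitian basis $\{e_\alpha\}$ of the operator space and write $\sigma_{a|x}=\sum_\alpha c_{a|x}^{(\alpha)}\,e_\alpha$ with $c_{a|x}^{(\alpha)}\in\mathbb{R}$. The assemblage constraint that $\sum_a\sigma_{a|x}$ is independent of $x$ carries over to $\sum_a c_{a|x}^{(\alpha)}$ for each $\alpha$. Deterministic distributions linearly span all real no-signalling functions: when $\sum_a c_{a|x}^{(\alpha)}=K\neq0$, the classical product formula $z_i^{(\alpha)}=K\prod_x [c_{i_x|x}^{(\alpha)}/K]$ solves $c_{a|x}^{(\alpha)}=\sum_i D(a|x,i)z_i^{(\alpha)}$, and the $K=0$ case is handled by adding and subtracting a uniform distribution. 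Assembling $Z_i\coloneqq\sum_\alpha z_i^{(\alpha)}e_\alpha$ yields the desired Hermitian decomposition.

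Now choose $M>0$ large enough that $Z_i+M\id\ge0$ for every $i$ (possible since there are finitely many $Z_i$ and each is bounded), and set $N\coloneqq 1+Md|{\bf i}|$ where $|{\bf i}|=|{\bf a}|^{|{\bf x}|}$. Defining $\eta_i\coloneqq(Z_i+M\id)/N$ gives $\eta_i\ge0$ and $\sum_i{\rm tr}(\eta_i)=(\sum_i{\rm tr}(Z_i)+Md|{\bf i}|)/N=(1+Md|{\bf i}|)/N=1$, so by Fact~\ref{Eq:LHSfact} the assemblage ${\bm\tau}\coloneqq\{\sum_i D(a|x,i)\eta_i\}_{a,x}$ lies in ${\bf LHS}$. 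Using the identity $\sum_{i:\,i_x=a}1=|{\bf i}|/|{\bf a}|$, a direct computation gives
\[
N\tau_{a|x}=\sigma_{a|x}+(M|{\bf i}|/|{\bf a}|)\,\id=\sigma_{a|x}+Md|{\bf i}|\,\omega_{a|x},
\]
which rearranges to ${\bm\sigma}=N{\bm\tau}-Md|{\bf i}|\,{\bm\omega}$. The coefficients satisfy $N-Md|{\bf i}|=1$, so this is a bona fide affine combination, and setting $p\coloneqq1/N\in(0,1)$ recovers $\{{\bm\sigma}=({\bm\tau}+(p-1){\bm\omega})/p\}$ as desired.

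The main obstacle is the signed decomposition step: were the $Z_i$ forced to be positive, one would have ${\bm\sigma}\in{\bf LHS}$ by Fact~\ref{Eq:LHSfact}, so for steerable ${\bm\sigma}$ some $Z_i$ must be indefinite and no purely operator-level analogue of the classical product construction can succeed. The resolution is that only Hermiticity is needed, which reduces to the classical spanning property applied componentwise in a Hermitian basis; positivity is then restored by a single identity shift, and the price of this shift is precisely the auxiliary LHS assemblage ${\bm\omega}$ and the factor $p<1$ appearing in the statement.
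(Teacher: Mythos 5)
Your proof is correct, and it takes a genuinely different route from the paper's. The paper realises ${\bm\sigma}$ quantum-mechanically, writing $\sigma_{a|x}={\rm tr}_A[\rho_{AB}(E_{a|x}\otimes\id_B)]$ for some bipartite state and POVMs, and then invokes the Gurvits-type separability ball around the maximally mixed state: for small enough $1-p$ the state $p\rho_{AB}+(1-p)\id_{AB}/d^2$ is separable, hence $p{\bm\sigma}+(1-p){\bm\omega}\in{\bf LHS}$ with $\omega_{a|x}={\rm tr}_A[(\id_{AB}/d^2)(E_{a|x}\otimes\id_B)]$. You instead work entirely at the level of the LHS structure itself: you observe that the deterministic strategies $D(a|x,i)$ linearly span the no-signalling subspace (your product formula for $K\neq0$ and the uniform-shift trick for $K=0$ check out), apply this componentwise in a Hermitian operator basis to get a signed decomposition $\sigma_{a|x}=\sum_iD(a|x,i)Z_i$ with $\sum_i{\rm tr}(Z_i)=1$, and then restore positivity by the shift $Z_i\mapsto Z_i+M\id$, whose cost is exactly the auxiliary LHS assemblage ${\bm\omega}$ and the factor $p=1/(1+Md|{\bf i}|)$. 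The arithmetic in your final rearrangement is right, and your use of Fact~\ref{Eq:LHSfact} in the sufficiency direction matches how the paper uses it elsewhere. What the two approaches buy: the paper's argument is shorter modulo two imported black boxes (the quantum realisability of assemblages and the separability ball), while yours is elementary and self-contained, gives an explicit $p$ in terms of $\max_i\norm{Z_i}_\infty$, and makes clear that the statement is purely a fact about the affine hull of the deterministic-strategy cone, independent of any quantum realisation -- at the price of a $p$ that degrades with $|{\bf i}|=|{\bf a}|^{|{\bf x}|}$, which is irrelevant for the claim.
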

\begin{proof}
First, for every ${\bm\sigma}$, there exists some bipartite state $\rho_{AB}$ and positive operator-valued measures $\{E_{a|x}\}_{a,x}$ (i.e., $\sum_aE_{a|x} = \id$ $\forall\,x$ and $E_{a|x}\ge0$ $\forall\,a,x$; namely, for each $x$, the set $\{E_{a|x}\}_{a}$ describes a measurement) achieving
$
\sigma_{a|x} = {\rm tr}_A\left[\rho_{AB}(E_{a|x}\otimes\id_B)\right].
$
Since there exists some $0<p<1$ such that
$
p\rho_{AB}+(1-p)\id_{AB}/d^2
$
is a separable state~\cite{Gurvits2002},
we obtain
\mbox{$
p{\bm\sigma} + (1-p){\bm\omega}\in{\bf LHS},
$}
where $\omega_{a|x}\coloneqq{\rm tr}_A\left[\id_{AB}/d^2(E_{a|x}\otimes\id_B)\right]$ is again in ${\bf LHS}$.
The result follows by using Eq.~\eqref{Eq:affineLHS}.
\end{proof}

With the above observation, we are now ready to prove the following result, \CYnew{which restates Result~\ref{Result:Strong-no-go} in the main text:}

\begin{result}
Let ${\bm\sigma}$ be a state assemblage with reduced state $\gamma$.
Suppose that $\gamma$ is full-rank, and $p_{a|x}\coloneqq{\rm tr}(\sigma_{a|x})>0$ for every $a,x$.
Let $\mathfrak{E} = \{\mathcal{N}_t\}_{t=0}^\infty$ be an evolution consisting of positive trace-preserving linear maps.
If $\mathfrak{E}$ thermalises the system to $\gamma$, there must exists a finite $t_*$ such that
\begin{align}
\mathcal{N}_t({\bm\sigma})\in{\bf LHS}\quad\forall\;t>t_*.
\end{align} 
Namely, signatures of incompatibility and steering must vanish at a finite time.
\end{result}
\begin{proof}
By Lemma~\ref{Eq:gamma_in_relint} and Eq.~\eqref{Eq:relintLHS-original}, there is $\epsilon_*>0$ such that
\begin{align}\label{Eq:proof-no-go-001}
\mathcal{B}\left(\{p_{a|x}\gamma\}_{a,x};\epsilon_*\right)\cap{\rm aff}({\bf LHS})\subseteq{\bf LHS}.
\end{align}
On the other hand, we have 
\mbox{$
\lim_{t\to\infty}\norm{\mathcal{N}_t(\cdot) - \gamma{\rm tr}(\cdot)}_\diamond = 0.
$}
This means that
\mbox{$
\lim_{t\to\infty}\norm{\mathcal{N}_t(\sigma_{a|x}) - p_{a|x}\gamma}_1 = 0\;\forall\;a,x.
$}
Hence, there exists a time point $t_*<\infty$ such that
\mbox{$
\sum_{a,x}\norm{\mathcal{N}_t(\sigma_{a|x}) - p_{a|x}\gamma}_1 < \epsilon_*\;\forall\;t>t_*
$.}
In other words,
\begin{align}\label{Eq:proof-no-go-002}
\mathcal{N}_t({\bm\sigma})\in\mathcal{B}\left(\{p_{a|x}\gamma\}_{a,x};\epsilon_*\right)\;\forall\;t>t_*.
\end{align}
Since each $\mathcal{N}_t$ is positive trace-preserving linear map, $\{\mathcal{N}_t(\sigma_{a|x})\}_{a,x}$ is a valid state assemblage and hence
 is in ${\rm aff}({\bf LHS})$ by Fact~\ref{fact:affineLHS}.
Finally, using \CYnew{Eq.~\eqref{Eq:proof-no-go-001},} we conclude that, for every $t>t_*$,
\begin{align}
\mathcal{N}_t({\bm\sigma})\in\mathcal{B}\left(\{p_{a|x}\gamma\}_{a,x};\epsilon_*\right)\cap{\rm aff}({\bf LHS})\subseteq{\bf LHS}.
\end{align}
Namely, $\mathcal{N}_t({\bm\sigma})\in{\bf LHS}\;\forall\,t>t_*$, as desired.
\end{proof}

\subsection*{\CYnew{Supplemental Material VIII:} Conic Programming and \CYnew{Result~\ref{Result:WorkExtraction}}}\label{App:Proof-Result:WorkExtraction}

\subsubsection*{Conic Programming}
Our proof relies on a specific form of the {\em conic programming}. 
We refer the readers to, e.g., Refs.~\cite{Uola2019PRL,Uola2020PRL,Takagi2019,Hsieh2023-2} for further details, and here we only recap the necessary ingredients. 
Since now, we use the notation ${\bf A}\coloneqq\bigoplus_{n=0}^{N-1}A_n$ for a set of Hermitian operators $\{A_n\}_{n=0}^{N-1}$.
In the form relevant to this work, we call the following optimisation the {\em primal problem}:
\begin{equation}\label{Eq:CPPrimal}
\begin{split}
\min_{{\bf X}}\quad&\sum_{n=0}^{N-1}{\rm tr}(A_nX_n)\\
{\rm s.t.}\quad& {\bf X}\in\C;\;\mathfrak{L}({\bf X})= {\bf B}.
\end{split}
\end{equation}
Here, $\sum_{n=0}^{N-1}{\rm tr}(A_nX_n)$ is an inner product between Hermitian ${\bf A},{\bf X}$, $\mathfrak{L}$ is a linear map, and ${\bf B} = \bigoplus_{n=0}^{N-1}B_n$ is a constant Hermitian operator.
Furthermore, $\C$ is a convex and closed {\em cone} (i.e., if ${\bf X}\in\C$, then $\alpha {\bf X}\in\C$ $\forall\;\alpha\ge0$; see, e.g., Appendix B in Ref.~\cite{Takagi2019}).
The {\em dual problem} of Eq.~\eqref{Eq:CPPrimal} is given by~\cite{Takagi2019}
\begin{equation}\label{Eq:CPDual}
\begin{split}
\max_{{\bf Y}}\quad&\sum_{n=0}^{N-1}{\rm tr}(B_nY_n)\\
{\rm s.t.}\quad&Y_n=Y_n^\dagger\;\forall\;n;\\
&\sum_{n=0}^{N-1}{\rm tr}\left[Y_n\mathfrak{L}({\bf X})_n\right]\le\sum_{n=0}^{N-1}{\rm tr}(A_nX_n)\;\forall\;{\bf X}\in\C,
\end{split}
\end{equation}
where $\mathfrak{L}({\bf X})_n$ is the $n$-th element of $\mathfrak{L}({\bf X})$.
Equation~\eqref{Eq:CPDual} always lower bounds Eq.~\eqref{Eq:CPPrimal}.
When they coincide, it is called the {\em strong duality}~\cite{Boyd-Book}.
This happens if there exists some ${\bf X}\in{\rm relint}\left(\C\right)$ (the relative interior of $\mathcal{C}$) such that $\mathfrak{L}({\bf X}) = {\bf B}$ --- the so-called {\em Slater's condition}~\cite{Boyd-Book} (see also Ref.~\cite{Takagi2019}).

\subsubsection*{Proof of \CYnew{Result~\ref{Result:WorkExtraction}}}
\CYtwo{Our proof strategy is as follows: First, we prove the case for $\mathbfcal{H}_{\eta=1}$; that is, when we set $\eta=1$.
After that, we prove the general case for $\mathbfcal{H}_{\eta}$ for every $0<\eta<\infty$.
The following is thus the proof for the case with $\mathbfcal{H}_{\eta=1}$.}
\begin{proof}
({\em Computing the dual problem}) 
First of all, note that we always assume $\gamma$ is full-rank and $p_{a|x}\coloneqq{\rm tr}(\sigma_{a|x})>0$ for every $a,x$.
These assumptions will play a crucial role in our proof.
To start with, consider a given {\em steerable} state assemblage ${\bm\sigma}\notin{\bf LHS}$.
To compute its ${\rm SR}_\gamma$, let us rewrite \CYnew{Eq.~\eqref{Eq:SRgamma} in the main text} as
\begin{equation}\label{Eq: Eq4}
\begin{split}
2^{{\rm SR}_\gamma({\bm\sigma})}\coloneqq\min_{\SRvar,{\bm\tau}}\quad&\SRvar\\
{\rm s.t.}\quad& \SRvar\ge1;\;{\bm\tau}\in{\bf LHS};\\
&\frac{\sigma_{a|x} + (\SRvar-1)\gamma p_{a|x}}{\SRvar} = \tau_{a|x}\;\forall\;a,x.
\end{split}
\end{equation}
This minimisation is lower bounded by the following one by extending the minimisation range to $\SRvar\ge0$:
\begin{equation}\label{Eq: min with lambda ge0}
\begin{split}
\min_{\SRvar,{\bm\tau}}\quad&\SRvar\\
{\rm s.t.}\quad& \SRvar\ge0;\;{\bm\tau}\in{\bf LHS};\\
&\frac{\sigma_{a|x} + (\SRvar-1)\gamma p_{a|x}}{\SRvar} = \tau_{a|x}\;\forall\;a,x.
\end{split}
\end{equation}
Importantly, since ${\bm\sigma}\notin{\bf LHS}$, every feasible $\SRvar$ in Eq.~\eqref{Eq: min with lambda ge0} must satisfy $\SRvar>1$. 
To see this, if $\sigma_{a|x} = \SRvar\tau_{a|x} + (1-\SRvar)\gamma p_{a|x}$ for some $0\le\SRvar\le1$ and ${\bm\tau}\in{\bf LHS}$, then it is a convex mixture of two LHS state assemblages, meaning that ${\bm\sigma}\in{\bf LHS}$, a contradiction.
Hence,  Eq.~\eqref{Eq: min with lambda ge0} is, in fact, equal to Eq.~\eqref{Eq: Eq4} whenever ${\bm\sigma}\notin{\bf LHS}$.
Now, we further observe that every feasible ${\bm\tau}$ of Eq.~\eqref{Eq: min with lambda ge0} must satisfy (note that feasible $\SRvar$ must be non-zero)
\begin{align}
\sum_a\tau_{a|x} &= \gamma = \sum_a\sigma_{a|x}\quad\forall\,x;\nonumber\\
{\rm tr}(\tau_{a|x}) &= p_{a|x}\coloneqq{\rm tr}(\sigma_{a|x})\quad\forall\;a,x.\label{Eq:classical statistic condition}
\end{align}
Hence, we can rewrite $2^{{\rm SR}_\gamma({\bm\sigma})}$ as the following minimisation:
\begin{equation}
\begin{split}
2^{{\rm SR}_\gamma({\bm\sigma})}=\min_{\SRvar,{\bm\tau}}\quad&\SRvar\\
{\rm s.t.}\quad& \SRvar\ge0;\;{\bm\tau}\in{\bf LHS}({\bm\sigma});\\
&\sigma_{a|x} - \gamma p_{a|x} = \SRvar\left[\tau_{a|x} - \gamma {\rm tr}(\tau_{a|x})\right]\;\forall\;a,x,
\end{split}
\end{equation}
where, recall from the main text, that
\begin{align}\label{Eq: LHS sigma def}
{\bf LHS}({\bm\sigma})\coloneqq\{{\bm\tau}\in{\bf LHS}\,|\,{\rm tr}(\sigma_{a|x})={\rm tr}(\tau_{a|x})\;\forall\,a,x\}
\end{align}
is the set of all LHS state assemblages with the same classical statistics with ${\bm\sigma}$.
Now, we define the variable 
\begin{align}
{\bf V}\coloneqq\bigoplus_{a,x}V_{a|x}
\end{align}
with 
\begin{align}
V_{a|x}\coloneqq\SRvar\tau_{a|x}\quad\forall\,a,x.
\end{align}
We also define the cone 
\begin{align}
\C_{\rm LHS}\coloneqq\left\{{\bf V} = \bigoplus_{a,x}\SRvar\tau_{a|x}\;\middle|\;\SRvar\ge0, {\bm\tau}\in{\bf LHS}({\bm\sigma})\right\},
\end{align}
which can be checked to be a convex and closed cone (i.e., it is a convex and closed set such that if ${\bf V}\in\C_{\rm LHS}$, then \mbox{$\alpha{\bf V}\in\C_{\rm LHS}$} $\forall\,\alpha\ge0$).
See, e.g., Appendix B in Ref.~\cite{Takagi2019} and Ref.~\cite{Hsieh2023-2} for a pedagogical explanation.
For \mbox{${\bf V}\in\C_{\rm LHS}$}, we have $\sum_{a}{\rm tr}(V_{a|x}) = \SRvar$ for every fixed $x$.
This means that, by summing over $x=0,...,|{\bf x}|-1$, we obtain
\begin{align}
\SRvar = \frac{1}{|{\bf x}|}\sum_{a,x}{\rm tr}(V_{a|x}).
\end{align}
Hence, for \mbox{${\bf V}\in\C_{\rm LHS}$}, we can write $2^{{\rm SR}_\gamma({\bm\sigma})}$ as
\begin{equation}
\begin{split}
\min_{{\bf V}}\quad&\frac{1}{|{\bf x}|}\sum_{a,x}{\rm tr}(V_{a|x})\\
{\rm s.t.}\quad& {\bf V}\in\C_{\rm LHS};\\
&\bigoplus_{a,x}\left(\sigma_{a|x} - \gamma p_{a|x}\right) = \bigoplus_{a,x}\left[V_{a|x} - \gamma {\rm tr}(V_{a|x})\right],
\end{split}
\end{equation}
which is in the standard form of conic programming as given in Eq.~\eqref{Eq:CPPrimal} by substituting
\begin{align}\label{Eq:CP variables conditions}
{\bf X}&\coloneqq{\bf V};\nonumber\\
\C&\coloneqq\C_{\rm LHS};\nonumber\\
{\bf A}&\coloneqq\bigoplus_{a,x}\left(\frac{\id}{|{\bf x}|}\right);\nonumber\\
{\bf B}&\coloneqq\bigoplus_{a,x}\left(\sigma_{a|x} - \gamma p_{a|x}\right);\nonumber\\
\mathfrak{L}({\bf V})&\coloneqq\bigoplus_{a,x}\left[V_{a|x} - \gamma {\rm tr}(V_{a|x})\right].
\end{align}
Using Eq.~\eqref{Eq:CPDual}, its dual problem is thus given by
\begin{equation}\label{Eq:Dual computation 0001}
\begin{split}
\max_{{\bf\Ydual}}\quad&\sum_{a,x}{\rm tr}\left[\Ydual_{a|x}\left(\sigma_{a|x} - \gamma p_{a|x}\right)\right]\\
{\rm s.t.}\quad& \Ydual_{a|x}=\Ydual_{a|x}^\dagger\;\forall\;a,x;\\
&\max_{{\bm\tau}'\in{\bf LHS}({\bm\sigma})}\sum_{a,x}{\rm tr}\left[\Ydual_{a|x}\left(\tau'_{a|x}-\gamma {\rm tr}(\tau'_{a|x})\right)\right]\le1,
\end{split}
\end{equation}
where $\SRvar$'s are cancelled out (and note that the constraint associated with $\SRvar=0$ always holds).
Note that the above constraint contains a maximisation over all state assemblages from ${\bf LHS}({\bm\sigma})$ (denoted by ${\bm\tau}'$).
They form a set that is, in general, greater than the set of all possible primal variables ${\bm\tau}$ [which must satisfy Eq.~\eqref{Eq:classical statistic condition}].
We thus use a different notation to avoid confusion.
Now, for a state $\rho$ and Hamiltonian $H$, from Ref.~\cite{Hsieh2023IP}, we have
(here, $d$ is the system dimension)
\begin{align}\label{Eq:Wdiff formula}
\Wdiff(\rho,H) = {\rm tr}(H\rho) + k_BT\ln{\rm tr}\left(e^{-\frac{H}{k_BT}}\right) - k_BT\ln d,
\end{align}
This implies the following useful relation for every two states $\rho,\eta$ (in the same system) and Hamiltonian $H$:
\begin{align}
\Wdiff(\rho,H) - \Wdiff(\eta,H) = {\rm tr}\left[H(\rho-\eta)\right].
\end{align}
Consequently, together with the definition of $\WdiffSA$, we obtain the following relation for {\em any} state assemblage \mbox{${\bm\omega}=\{\omega_{a|x}\}_{a,x}$}:
\begin{align}\label{Eq:useful2}
|{\bf x}|\times\WdiffSA({\bm\omega},{\bf H})&=\sum_{a,x}{\rm tr}(\omega_{a|x})\left[\Wdiff(\hat{\omega}_{a|x},H_{a|x}) - \Wdiff(\gamma,H_{a|x})\right]\nonumber\\
&=\sum_{a,x}{\rm tr}\left[H_{a|x}\left(\omega_{a|x}-{\rm tr}(\omega_{a|x})\gamma\right)\right],
\end{align}
where recall that $\hat{\omega}_{a|x}\coloneqq\omega_{a|x}/{\rm tr}(\omega_{a|x})$ is the normalised state.
Then, by defining the Hamiltonian
\begin{align}
H_{a|x}\coloneqq (k_BT)|{\bf x}|\times\Ydual_{a|x}\quad\forall\,a,x,
\end{align}
and using Eq.~\eqref{Eq:useful2},
we can write the dual problem Eq.~\eqref{Eq:Dual computation 0001} as
\begin{equation}\label{Eq:dual_0001}
\begin{split}
\max_{{\bf H}}\quad&\frac{\WdiffSA({\bm\sigma},{\bf H})}{k_BT}\\
{\rm s.t.}\quad& H_{a|x}=H_{a|x}^\dagger\;\forall\;a,x;\;\;\max_{{\bm\tau}'\in{\bf LHS}({\bm\sigma})}\WdiffSA({\bm\tau}',{\bf H})\le{k_BT},
\end{split}
\end{equation}
which is maximising over all possible Hamiltonians ${\bf H}$.
Note that each $H_{a|x}$ now carries the dimensionality of energy, and we always assume that $T$ is fixed, finite, and strictly positive.
Also, in general, $\gamma$ is {\em not} a thermal state subject to $H_{a|x}$ in this optimisation.
One should just view $\gamma$ as a quantum state here.

({\em Checking Slater's condition}) When the strong duality holds, Eq.~\eqref{Eq:dual_0001} will equal to $2^{{\rm SR}_\gamma({\bm\sigma})}$.
Hence, let us check the Slater's condition now.
First, for $\SRvar\ge1$, define \mbox{${\bm\sigma}^{(\SRvar)}\coloneqq\left\{\sigma_{a|x}^{(\SRvar)}\right\}_{a,x}$} with
\begin{align}\label{Eq:sigma^lambda_def}
\sigma_{a|x}^{(\SRvar)}\coloneqq\frac{1}{\SRvar}\sigma_{a|x}+\left(1-\frac{1}{\SRvar}\right)p_{a|x}\gamma,
\end{align}
which is a valid state assemblage since, importantly, $\SRvar\ge1$.
Also, it satisfies
\begin{align}\label{Eq:sigma_lambda_estimate}
\sum_{a,x}\norm{\sigma_{a|x}^{(\SRvar)} - p_{a|x}\gamma}_1 = \frac{1}{\SRvar}\sum_{a,x}\norm{\sigma_{a|x} - p_{a|x}\gamma}_1.
\end{align}
This can be as small as we want by considering a large enough (while still finite) $\SRvar$.
Using the same proof of Lemma~\ref{Eq:gamma_in_relint}, one can show that \mbox{$\{p_{a|x}\gamma\}_{a,x}\in{\rm relint}\left({\bf LHS}({\bm\sigma})\right)$}, where
\begin{align}\label{Eq:relintLHS-sigma-original}
{\rm relint}({\bf LHS}(\bm\sigma))\coloneqq&\{{\bm\tau}\in{\bf LHS}(\bm\sigma)\;|\;\exists\;\epsilon>0\;\text{s.t.}\;\nonumber\\
&\quad\mathcal{B}({\bm\tau};\epsilon)\cap{\rm aff}({\bf LHS}(\bm\sigma))\subseteq{\bf LHS}(\bm\sigma)\}
\end{align}
is the relative interior of ${\bf LHS}(\bm\sigma)$ [see also Eq.~\eqref{Eq:relintLHS-original}].
Then,
there exists \mbox{$\epsilon_*>0$} such that
\begin{align}
\mathcal{B}\left(\{p_{a|x}\gamma\}_{a,x};\epsilon_*\right)\cap{\rm aff}({\bf LHS}({\bm\sigma}))\subseteq{\bf LHS}({\bm\sigma}).
\end{align}
Here shows the importance of assuming $\gamma$ is full-rank and \mbox{$p_{a|x}\coloneqq{\rm tr}(\sigma_{a|x})>0$} $\forall\,a,x$ --- so that the proof of Lemma~\ref{Eq:gamma_in_relint} can apply.
By choosing a finite and sufficiently large $\SRvar_*>1$ and use Eq.~\eqref{Eq:sigma_lambda_estimate}, we are able to make sure
\mbox{$
{\bm\sigma}^{(\SRvar_*)}\in\mathcal{B}\left(\{p_{a|x}\gamma\}_{a,x};\epsilon_*\right).
$}
This means that we can choose an even smaller \mbox{$\delta_*>0$} achieving 
\begin{align}
\mathcal{B}\left({\bm\sigma}^{(\SRvar_*)};\delta_*\right)\subseteq\mathcal{B}\left(\{p_{a|x}\gamma\}_{a,x};\epsilon_*\right).
\end{align}
This implies that
$
\mathcal{B}\left({\bm\sigma}^{(\SRvar_*)};\delta_*\right)\cap{\rm aff}({\bf LHS}({\bm\sigma}))\subseteq{\bf LHS}({\bm\sigma}).
$
Also, we have ${\bm\sigma}^{(\SRvar_*)}\in{\rm aff}({\bf LHS}({\bm\sigma}))$ [see Fact~\ref{lemma: LHS sigma relint} and Eq.~\eqref{Eq:sigma lambda in aff LHS sigma} for details].
This thus means \mbox{${\bm\sigma}^{(\SRvar_*)}\in{\bf LHS}({\bm\sigma})$}.
Using the definition Eq.~\eqref{Eq:relintLHS-sigma-original}, we conclude that 
\mbox{${\bm\sigma}^{(\SRvar_*)}\in{\rm relint}({\bf LHS}({\bm\sigma}))$}.
Now, define
\begin{align}
{\bf V}_*\coloneqq\bigoplus_{a,x}\SRvar_*{\sigma}_{a|x}^{(\SRvar_*)}.
\end{align}
Using Lemma A.3 in \CYnew{the first arXiv version of} Ref.~\cite{Hsieh2023-2}, Eqs.~\eqref{Eq:CP variables conditions} and~\eqref{Eq:sigma^lambda_def}, one can then check that
\begin{align}
{\bf V}_*\in{\rm relint}(\mathcal{C}_{\rm LHS})\quad\&\quad\mathfrak{L}({\bf V}_*) = {\bf B}.
\end{align}
Hence, Slater's conditions are satisfied, and the strong duality holds, and Eq.~\eqref{Eq:dual_0001} is the same as $2^{{\rm SR}_\gamma({\bm\sigma})}$.

({\em Proving the upper bound})
Using the strong duality (which is true whenver ${\bm\sigma}\notin{\bf LHS}$), Eq.~\eqref{Eq:dual_0001} can be rewritten as
\begin{equation}
\begin{split}
2^{{\rm SR}_\gamma({\bm\sigma})}=\max_{{\bf H}}\quad&\frac{\WdiffSA({\bm\sigma},{\bf H})}{k_BT}\\
{\rm s.t.}\quad& H_{a|x}=H_{a|x}^\dagger\;\forall\;a,x;\\
&\max_{{\bm\tau}'\in{\bf LHS}({\bm\sigma})}\WdiffSA({\bm\tau}',{\bf H})\le{k_BT}\le\WdiffSA({\bm\sigma},{\bf H}).
\end{split}
\end{equation}
This is because by definition we always have \mbox{$2^{{\rm SR}_\gamma({\bm\sigma})}\ge1$;} hence, putting the condition $\WdiffSA({\bm\sigma},{\bf H})\ge{k_BT}$ should output the same optimal value.
Using the constraint $\max_{{\bm\tau}'\in{\bf LHS}({\bm\sigma})}\WdiffSA({\bm\tau}',{\bf H})\le k_BT$, we obtain the desired bound for every ${\bf H}$ satisfying $\WdiffSA({\bm\sigma},{\bf H})\ge k_BT$, which is written as ${\bf H}\in\mathbfcal{H}_{\eta=1}({\bm\sigma})$ in the main text:
\begin{align}\label{Eq:lemma:upperbound}
2^{{\rm SR}_\gamma({\bm\sigma})}\le\max_{{\bf H}\in\CYtwo{\mathbfcal{H}_{\eta=1}}({\bm\sigma})}\frac{\WdiffSA({\bm\sigma},{\bf H})}{\max_{{\bm\tau}'\in{\bf LHS}({\bm\sigma})}\WdiffSA({\bm\tau}',{\bf H})}.
\end{align}

({\em Achieving the upper bound})
Finally, we show that the upper bound in Eq.~\eqref{Eq:lemma:upperbound} can be achieved.
First of all, note that two finite-dimensional Hermitian operators $P,Q$ are the same, i.e., $P=Q$, if and only if ${\rm tr}(PA) = {\rm tr}(QA)$ for every Hermitian operator $A$.
Hence, for any ${\bm\sigma}$ with the reduced state $\gamma$, we can rewrite the primal problem of $2^{{\rm SR}_\gamma({\bm\sigma})}$ \CYnew{[i.e., Eq.~\eqref{Eq:SRgamma} in the main text]} as
\begin{equation}
\begin{split}\label{Eq:minimisation0000}
\min_{\SRvar,{\bm\tau}}\quad&\SRvar\\
{\rm s.t.}\quad&\SRvar\ge1;\;\;{\bm\tau}\in{\bf LHS}({\bm\sigma});\\ 
&{\rm tr}\left[H_{a|x}\left(\sigma_{a|x}-p_{a|x}\gamma\right)\right] = \SRvar\times{\rm tr}\left[H_{a|x}\left(\tau_{a|x}-p_{a|x}\gamma\right)\right]\\
&\forall\,a,x,H_{a|x}=H_{a|x}^\dagger.
\end{split}
\end{equation}
Recall that the {\em primal} variable ${\bm\tau}$ satisfies \mbox{$p_{a|x} = {\rm tr}(\tau_{a|x})$} [see Eq.~\eqref{Eq:classical statistic condition}].
Using Eq.~\eqref{Eq:useful2}, for {\em every fixed} collection of Hamiltonians ${\bf H} = \{H_{a|x}\}_{a,x}$, we have that
\begin{align}\label{Eq:minimisation001}
&2^{{\rm SR}_\gamma({\bm\sigma})}\ge\nonumber\\
&\min\left\{\SRvar\ge1\,\middle|\,\exists\;{\bm\tau}\in{\bf LHS}({\bm\sigma})\,\text{s.t.}\,\WdiffSA({\bm\sigma},{\bf H}) = \SRvar\WdiffSA({\bm\tau},{\bf H})\right\}.
\end{align}
This is because every feasible solution $(\SRvar,{\bm\tau})$ of Eq.~\eqref{Eq:minimisation0000} will achieve $\WdiffSA({\bm\sigma},{\bf H}) = \SRvar\WdiffSA({\bm\tau},{\bf H})$ for the {\em given} ${\bf H}$ according to Eq.~\eqref{Eq:useful2}.
Hence, Eq.~\eqref{Eq:minimisation0000} is sub-optimal to the minimisation in Eq.~\eqref{Eq:minimisation001}.
Now, let us consider an arbitrary ${\bf H}$ from \CYtwo{$\mathbfcal{H}_{\eta=1}({\bm\sigma})$}. 
Then, for this ${\bf H}$, any feasible solution $(\SRvar,{\bm\tau})$ to the minimisation in Eq.~\eqref{Eq:minimisation001} must satisfy
\begin{align}
{0<k_BT}\le\WdiffSA({\bm\sigma},{\bf H}) = \SRvar\WdiffSA({\bm\tau},{\bf H})\le\SRvar\max_{{\bm\tau}'\in{\bf LHS}({\bm\sigma})}\WdiffSA({\bm\tau}',{\bf H}).
\end{align}
Note that $T$ is always considered finite and strictly positive.
Also, again, the above maximisation is taken over all possible state assemblages from ${\bf LHS}({\bm\sigma})$ (denoted by ${\bm\tau}'$), which form a set that is larger than the one of all possible primal variables ${\bm\tau}$ [that satisfies Eq.~\eqref{Eq:classical statistic condition}].
This means that, for {\em every} feasible $\SRvar$ to Eq.~\eqref{Eq:minimisation001} with an \CYtwo{${\bf H}\in\mathbfcal{H}_{\eta=1}({\bm\sigma})$,} one can divide $\max_{{\bm\tau}'\in{\bf LHS}({\bm\sigma})}\WdiffSA({\bm\tau}',{\bf H})$ (which is positive) and obtain
\begin{align}\label{Eq:lambda lower bound}
\SRvar\ge\frac{\WdiffSA({\bm\sigma},{\bf H})}{\max_{{\bm\tau}'\in{\bf LHS}({\bm\sigma})}\WdiffSA({\bm\tau}',{\bf H})}.
\end{align}
Using Eq.~\eqref{Eq:minimisation001} and maximising over \CYtwo{${\bf H}\in\mathbfcal{H}_{\eta=1}({\bm\sigma})$,} we obtain
\begin{align}
2^{{\rm SR}_\gamma({\bm\sigma})}
&\ge\max_{{\bf H}\in\CYtwo{\mathbfcal{H}_{\eta=1}}({\bm\sigma})}\min\Big\{\SRvar\ge1\,\Big|\,\exists\;{\bm\tau}\in{\bf LHS}({\bm\sigma})\;\text{s.t.}\nonumber\\
&\quad\quad\quad\quad\quad\quad\quad\quad\quad\quad\;\;\;\WdiffSA({\bm\sigma},{\bf H}) = \SRvar\WdiffSA({\bm\tau},{\bf H})\Big\}\nonumber\\
&\ge\max_{{\bf H}\in\CYtwo{\mathbfcal{H}_{\eta=1}}({\bm\sigma})}\frac{\WdiffSA({\bm\sigma},{\bf H})}{\max_{{\bm\tau}'\in{\bf LHS}({\bm\sigma})}\WdiffSA({\bm\tau}',{\bf H})},
\end{align}
where we have used Eq.~\eqref{Eq:lambda lower bound}.
This means that the upper bound in Eq.~\eqref{Eq:lemma:upperbound} can be achieved for every given ${\bm\sigma}\notin{\bf LHS}$.
\end{proof}

\CYtwo{The above proof completes the case for $\mathbfcal{H}_{\eta=1}$.
Now, we show the case for every $0<\eta<\infty$:}

\CYtwo{
\begin{proof}
What we just proved is
\begin{align}\label{Eq: general case 001}
&2^{{\rm SR}_\gamma({\bm\sigma})}
=\max_{{\bf H}\in\mathbfcal{H}_{\eta=1}({\bm\sigma})}\frac{\WdiffSA({\bm\sigma},{\bf H})}{\max_{{\bm\tau}'\in{\bf LHS}({\bm\sigma})}\WdiffSA({\bm\tau}',{\bf H})}\nonumber\\
&=\max_{{\bf H}\in\mathbfcal{H}_{\eta=1}({\bm\sigma})}\frac{\sum_{a,x}{\rm tr}\left[H_{a|x}\left(\sigma_{a|x}-{\rm tr}(\sigma_{a|x})\gamma\right)\right]}{\max_{{\bm\tau}'\in{\bf LHS}({\bm\sigma})}\sum_{a,x}{\rm tr}\left[H_{a|x}\left(\tau'_{a|x}-{\rm tr}(\tau'_{a|x})\gamma\right)\right]}\nonumber\\
&=\max_{{\bf H}\in\mathbfcal{H}_{\eta=1}({\bm\sigma})}\frac{\sum_{a,x}{\rm tr}\left[H_{a|x}\left(\sigma_{a|x}-p_{a|x}\gamma\right)\right]}{\max_{{\bm\tau}'\in{\bf LHS}({\bm\sigma})}\sum_{a,x}{\rm tr}\left[H_{a|x}\left(\tau'_{a|x}-p_{a|x}\gamma\right)\right]},
\end{align}
where we have used Eq.~\eqref{Eq:useful2} and the fact that ${\rm tr}(\tau'_{a|x}) = {\rm tr}(\sigma_{a|x}) = p_{a|x}$ $\forall\,a,x$ 
 for ${\bm\tau}'\in{\bf LHS}({\bm\sigma})$ [see also Eqs.~\eqref{Eq:classical statistic condition} and~\eqref{Eq: LHS sigma def}].
Now, consider a parameter $0<\eta<\infty$.
Using Eq.~\eqref{Eq:useful2} again, we have ${\bf H}\in\mathbfcal{H}_{\eta=1}({\bm\sigma})$ if and only if
\begin{align}
\sum_{a,x}{\rm tr}\left[H_{a|x}\left(\sigma_{a|x}-p_{a|x}\gamma\right)\right] = |{\bf x}|\times\WdiffSA({\bm\sigma},{\bf H})\ge k_BT|{\bf x}|,
\end{align}
which is true if and only if $\{\eta\times H_{a|x}\}_{a,x}\in\mathbfcal{H}_\eta({\bm\sigma})$.
Hence, by multiplying both the numerator and the denominator by $\eta$ and defining a new variable \mbox{${\bf H}^{(\eta)}\coloneqq\{\eta\times H_{a|x}\}_{a,x}$}, one is able to rewrite Eq.~\eqref{Eq: general case 001} into
\begin{align}
&2^{{\rm SR}_\gamma({\bm\sigma})}
=\max_{{\bf H}\in\mathbfcal{H}_{\eta=1}({\bm\sigma})}\frac{\WdiffSA({\bm\sigma},{\bf H})}{\max_{{\bm\tau}'\in{\bf LHS}({\bm\sigma})}\WdiffSA({\bm\tau}',{\bf H})}\nonumber\\
&=\max_{{\bf H}^{(\eta)}\in\mathbfcal{H}_{\eta}({\bm\sigma})}\frac{\sum_{a,x}{\rm tr}\left[H_{a|x}^{(\eta)}\left(\sigma_{a|x}-p_{a|x}\gamma\right)\right]}{\max_{{\bm\tau}'\in{\bf LHS}({\bm\sigma})}\sum_{a,x}{\rm tr}\left[H_{a|x}^{(\eta)}\left(\tau'_{a|x}-p_{a|x}\gamma\right)\right]}\nonumber\\
&=\max_{{\bf H}^{(\eta)}\in\mathbfcal{H}_{\eta}({\bm\sigma})}\frac{\WdiffSA({\bm\sigma},{\bf H}^{(\eta)})}{\max_{{\bm\tau}'\in{\bf LHS}({\bm\sigma})}\WdiffSA({\bm\tau}',{\bf H}^{(\eta)})},
\end{align}
where we have used Eq.~\eqref{Eq:useful2} again.
This thus completes the proof of Result~\ref{Result:WorkExtraction} in the main text.
\end{proof}
}

It turns out that we can use SDP to efficiently find Hamiltonians to certify quantum signatures:
\begin{proposition}\label{coro:SDP-WorkExtraction}
Let ${\bm\sigma}$ be a state assemblage with reduced state $\gamma$.
Then ${\rm SR}_\gamma({\bm\sigma})>0$ if and only if there is ${\bf H}$ 
such that
\begin{align}
\WdiffSA({\bm\sigma},{\bf H})>\max_{{\bm\tau}\in{\bf LHS}({\bm\sigma})}\WdiffSA({\bm\tau},{\bf H}).
\end{align}
Moreover, running SDP can explicitly find this ${\bf H}$.
\end{proposition}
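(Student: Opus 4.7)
The plan is to obtain Proposition~\ref{coro:SDP-WorkExtraction} as a near-immediate corollary of Result~\ref{Result:WorkExtraction} together with the dual SDP derived in Supplemental Material IV. The two directions of the biconditional are separated; the second is essentially trivial, while the first relies on the work-extraction characterisation already established, and the SDP statement is handled by reading off optimal dual variables.

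For the implication ${\rm SR}_\gamma({\bm\sigma})>0\Rightarrow\exists\,{\bf H}$, the hypothesis forces ${\bm\sigma}\notin{\bf LHS}$ since ${\rm SR}_\gamma$ vanishes precisely on ${\bf LHS}$ (noted just after Eq.~\eqref{Eq:SRgamma}). Applying Result~\ref{Result:WorkExtraction} with any $0<\eta<\infty$ for which $\mathbfcal{H}_\eta({\bm\sigma})$ is nonempty then yields
\begin{equation*}
2^{{\rm SR}_\gamma({\bm\sigma})}=\max_{{\bf H}\in\mathbfcal{H}_\eta({\bm\sigma})}\frac{\WdiffSA({\bm\sigma},{\bf H})}{\max_{{\bm\tau}\in{\bf LHS}({\bm\sigma})}\WdiffSA({\bm\tau},{\bf H})}>1,
\end{equation*}
and any maximiser ${\bf H}_*$ (which exists by compactness and positivity of the denominator, as ensured by the $\eta$-constraint) delivers the desired strict inequality. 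For the converse, the key observation is that ${\bm\sigma}\in{\bf LHS}$ implies ${\bm\sigma}\in{\bf LHS}({\bm\sigma})$, because ${\bm\sigma}$'s classical statistics trivially match themselves; hence in that case $\max_{{\bm\tau}\in{\bf LHS}({\bm\sigma})}\WdiffSA({\bm\tau},{\bf H})\ge\WdiffSA({\bm\sigma},{\bf H})$ for every ${\bf H}$, forbidding the strict inequality. Contrapositively, the existence of a witnessing ${\bf H}$ implies ${\bm\sigma}\notin{\bf LHS}$ and hence ${\rm SR}_\gamma({\bm\sigma})>0$.

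For the SDP claim, I would appeal directly to the dual SDP~\eqref{Eq:SRgammaSDPdual} computed in Supplemental Material IV for $2^{-{\rm SR}_\gamma({\bm\sigma})}$ with variables $\{\Ydual_{a|x}\}_{a,x}$ and $\omega$. Following the substitution $H_{a|x}\coloneqq(k_BT)|{\bf x}|\Ydual_{a|x}$ that appears inside the proof of Result~\ref{Result:WorkExtraction}, an optimal set of Hamiltonians ${\bf H}_*$ witnessing Proposition~\ref{coro:SDP-WorkExtraction} is read off from an optimal $\{\Ydual_{a|x}^*\}_{a,x}$ of that SDP. Strong duality, verified in Supplemental Material IV via the explicit Slater-type feasibility check, ensures the dual attains $2^{-{\rm SR}_\gamma({\bm\sigma})}<1$ whenever ${\rm SR}_\gamma({\bm\sigma})>0$, so the extracted ${\bf H}_*$ realises a strict---not merely non-strict---work-extraction advantage.

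The main obstacle is bookkeeping rather than conceptual: one must verify that the Hamiltonian extracted from the SDP variables $\{\Ydual_{a|x}^*\}_{a,x}$ corresponds to the ratio formulation of Proposition~\ref{coro:SDP-WorkExtraction} after the normalisation implicit in passing from the primal-dual pair of Supplemental Material IV to the ratio in Eq.~\eqref{Eq:Result:WorkExtraction}. This requires tracing how the constraint $\max_{{\bm\tau}\in{\bf LHS}({\bm\sigma})}\WdiffSA({\bm\tau},{\bf H})\le k_BT$ effectively normalises the denominator to unity; since the strict inequality is scale-invariant under ${\bf H}\mapsto c{\bf H}$ for $c>0$, this introduces no genuine difficulty and the SDP recipe is immediate.
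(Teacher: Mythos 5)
Your equivalence argument takes a different route from the paper's: you derive both directions from Result~\ref{Result:WorkExtraction}, whereas the paper's proof of Proposition~\ref{coro:SDP-WorkExtraction} never invokes that result and instead works directly with the dual SDP~\eqref{Eq:SRgammaSDPdual} of Supplemental Material IV. Your converse direction (${\bm\sigma}\in{\bf LHS}\Rightarrow{\bm\sigma}\in{\bf LHS}({\bm\sigma})$, so no ${\bf H}$ can separate) matches the paper's opening reduction, and the forward direction via the ratio being $>1$ is sound as far as it goes; but it silently imports the hypotheses of Result~\ref{Result:WorkExtraction} ($\gamma$ full-rank and ${\rm tr}(\sigma_{a|x})>0$ for all $a,x$, which are needed there for Slater's condition in the conic program), whereas the Proposition assumes neither. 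The paper's direct SDP route avoids this, since strong duality of~\eqref{Eq:SRgammaSDPprimal}--\eqref{Eq:SRgammaSDPdual} is verified in Supplemental Material IV without those assumptions.

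The genuine gap is in the ``SDP explicitly finds ${\bf H}$'' claim. You propose to read ${\bf H}$ off an optimal solution of the dual SDP~\eqref{Eq:SRgammaSDPdual} via the substitution $H_{a|x}=(k_BT)|{\bf x}|\Ydual_{a|x}$ taken from the proof of Result~\ref{Result:WorkExtraction}, but that substitution belongs to a \emph{different} dual: the conic program of Supplemental Material VIII, whose value is $2^{+{\rm SR}_\gamma({\bm\sigma})}$ and whose variables enter $\WdiffSA$ with the opposite sign convention relative to the SDP~\eqref{Eq:SRgammaSDPdual}, whose value is $2^{-{\rm SR}_\gamma({\bm\sigma})}$. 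Applied with your plus sign to an optimal $\widetilde{\Ydual}_{a|x}$ of~\eqref{Eq:SRgammaSDPdual}, the resulting Hamiltonians do not witness the advantage; the paper sets $H_{a|x}=-(k_BT)|{\bf x}|\widetilde{\Ydual}_{a|x}$. Moreover, checking that this choice actually yields a \emph{strict} separation is not the scale-invariance bookkeeping you describe: it is the entire content of the paper's proof, namely the chain
\begin{align*}
\frac{\WdiffSA({\bm\sigma},{\bf H})}{k_BT}\ge 1-\widetilde{\omega}>\sum_{a,x}{\rm tr}\bigl(\widetilde{\Ydual}_{a|x}\gamma\bigr)p_{a|x}\ge\max_{{\bm\tau}\in{\bf LHS}({\bm\sigma})}\frac{\WdiffSA({\bm\tau},{\bf H})}{k_BT},
\end{align*}
which uses all three dual-feasibility conditions: the optimal value being strictly below $1$ [Eq.~\eqref{Eq:first constraint}], the constraint $\sum_{a,x}D(a|x,i)\widetilde{\Ydual}_{a|x}\ge0$ to bound $\min_{{\bm\tau}\in{\bf LHS}({\bm\sigma})}\sum_{a,x}{\rm tr}(\widetilde{\Ydual}_{a|x}\tau_{a|x})$ from below by zero [Eq.~\eqref{Eq:necessary second constraint}], and Eq.~\eqref{Eq:third constraint} to bound $\WdiffSA({\bm\sigma},{\bf H})$ from below. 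None of this is reproduced in your proposal, so the constructive half of the statement is not established.
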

\begin{proof}
First, since ${\rm SR}_\gamma({\bm\sigma})>0$ if and only if ${\bm\sigma}\notin{\bf LHS}$, which is true if and only if ${\bm\sigma}\notin{\bf LHS}({\bm\sigma})$, it suffices to show that ${\rm SR}_\gamma({\bm\sigma})>0$ implies the existence of some ${\bf H}$ that can achieve the desired strict inequality.
When we have ${\rm SR}_\gamma({\bm\sigma})>0$ (i.e., $2^{-{\rm SR}_\gamma({\bm\sigma})}<1$), the dual SDP of Eq.~\eqref{Eq:SRgamma} in the main text, that is, Eq.~\eqref{Eq:SRgammaSDPdual}, implies that there exists some Hermitian operators $\widetilde{\Ydual}_{a|x} = \widetilde{\Ydual}_{a|x}^\dagger$ and $\widetilde{\omega}\ge0$ achieving
\begin{align}
&\sum_{a,x}{\rm tr}\left(\widetilde{\Ydual}_{a|x}\gamma\right)p_{a|x} + \widetilde{\omega} = 2^{-{\rm SR}_\gamma({\bm\sigma})} < 1;\label{Eq:first constraint}\\
&\sum_{a,x}D(a|x,i)\widetilde{\Ydual}_{a|x}\ge0\;\forall\,i;\label{Eq:second constraint}\\
&\sum_{a,x}{\rm tr}\left(\widetilde{\Ydual}_{a|x}\gamma\right)p_{a|x}+\widetilde{\omega}\ge\sum_{a,x}{\rm tr}\left(\widetilde{\Ydual}_{a|x}\sigma_{a|x}\right)+1.\label{Eq:third constraint}
\end{align}
Importantly, the operators $\widetilde{\Ydual}_{a|x}$'s can be found by running SDP.
Now, by multiplying Eq.~\eqref{Eq:second constraint} by any $\eta_i\ge0$ with the condition $\sum_i{\rm tr}(\eta_i)=1$, summing over $i$, taking trace, and using Fact~\ref{Eq:LHSfact}, we obtain
\begin{align}\label{Eq:necessary second constraint}
\min_{{\bm\tau}\in{\bf LHS}({\bm\sigma})}\sum_{a,x}{\rm tr}\left(\widetilde{\Ydual}_{a|x}\tau_{a|x}\right)\ge\min_{{\bm\tau}'\in{\bf LHS}}\sum_{a,x}{\rm tr}\left(\widetilde{\Ydual}_{a|x}\tau'_{a|x}\right)\ge0.
\end{align}
Now, consider the Hamiltonians 
\begin{align}
H_{a|x}\coloneqq-(k_BT)|{\bf x}|\times\widetilde{\Ydual}_{a|x},
\end{align}
which, again, can be found by running SDP. 
Then we have
\begin{align}
\frac{\WdiffSA\left({\bm\sigma},{\bf H}\right)}{k_BT}& = \sum_{a,x}{\rm tr}\left[\widetilde{\Ydual}_{a|x}\left(p_{a|x}\gamma - \sigma_{a|x}\right)\right]\nonumber\\
&\ge1-\widetilde{\omega}\nonumber\\
&>\sum_{a,x}{\rm tr}\left(\widetilde{\Ydual}_{a|x}\gamma\right)p_{a|x}\nonumber\\
&\ge\sum_{a,x}{\rm tr}\left(\widetilde{\Ydual}_{a|x}\gamma\right)p_{a|x} - \min_{{\bm\tau}\in{\bf LHS}({\bm\sigma})}\sum_{a,x}{\rm tr}\left(\widetilde{\Ydual}_{a|x}\tau_{a|x}\right)\nonumber\\
&= \max_{{\bm\tau}\in{\bf LHS}({\bm\sigma})}\sum_{a,x}{\rm tr}\left[\widetilde{\Ydual}_{a|x}\left(p_{a|x}\gamma - \tau_{a|x}\right)\right]\nonumber\\
& = \max_{{\bm\tau}\in{\bf LHS}({\bm\sigma})}\frac{\WdiffSA\left({\bm\tau},{\bf H}\right)}{k_BT}.
\end{align}
We used Eq.~\eqref{Eq:useful2} in the first line.
The second line is due to Eq.~\eqref{Eq:third constraint}. 
In the third line, the strict inequality follows from Eq.~\eqref{Eq:first constraint}.
The fourth line can be obtained by using Eq.~\eqref{Eq:necessary second constraint}.
Finally, for ${\bm\tau}\in{\bf LHS}({\bm\sigma})$, we have ${\rm tr}(\tau_{a|x}) = p_{a|x}$ $\forall\,a,x$, which leads to the last line by using Eq.~\eqref{Eq:useful2}.
\end{proof}

To complete this section, we prove a fact that was used to show the strong duality in the proof of Result~\ref{Result:WorkExtraction}:
\begin{fact}\label{lemma: LHS sigma relint}
Let ${\bm\sigma}$ be a state assemblage with reduced state $\gamma$.
Suppose that $\gamma$ is full-rank.
Then there exist two state assemblages ${\bm\tau},{\bm\omega}$ in ${\bf LHS}({\bm\sigma})$ and $0<p<1$ such that
\begin{align}
{\bm\sigma} = \frac{{\bm\tau} + (p-1){\bm\omega}}{p}.
\end{align}
Hence, ${\bm\sigma}\in{\rm aff}({\bf LHS}({\bm\sigma}))$.
\end{fact}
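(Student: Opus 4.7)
The plan is to realise the decomposition $\bm\sigma = (\bm\tau + (p-1)\bm\omega)/p$ by a small perturbation argument around a canonical ``no-information'' reference. First, I would choose $\bm\omega \coloneqq \{p_{a|x}\gamma\}_{a,x}$ with $p_{a|x} \coloneqq {\rm tr}(\sigma_{a|x})$. This $\bm\omega$ already lies in ${\bf LHS}(\bm\sigma)$: it admits a trivial LHS model (a single hidden variable carrying the state $\gamma$ with conditional output distribution $p_{a|x}$), and its marginals satisfy \mbox{${\rm tr}(\omega_{a|x}) = p_{a|x} = {\rm tr}(\sigma_{a|x})$}.

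Next I would define $\bm\tau \coloneqq p\bm\sigma + (1-p)\bm\omega$ for a parameter $0<p<1$ to be fixed later. This identity rearranges exactly to the claimed decomposition, so the remaining task is to choose $p$ small enough that $\bm\tau \in {\bf LHS}(\bm\sigma)$. By construction $\bm\tau$ is a positive-operator state assemblage with ${\rm tr}(\tau_{a|x}) = p_{a|x}$, so any proof of $\bm\tau \in {\bf LHS}$ immediately upgrades to $\bm\tau \in {\bf LHS}(\bm\sigma)$.

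For the containment $\bm\tau \in {\bf LHS}$ I would combine two ingredients already developed in the supplemental. By Lemma~\ref{Eq:gamma_in_relint}, $\bm\omega \in {\rm relint}({\bf LHS})$, so there exists $\epsilon_* > 0$ with $\mathcal{B}(\bm\omega;\epsilon_*) \cap {\rm aff}({\bf LHS}) \subseteq {\bf LHS}$. By Fact~\ref{fact:affineLHS}, every state assemblage---including $\bm\tau$---lies inside ${\rm aff}({\bf LHS})$. The elementary estimate
\[\sum_{a,x}\norm{\tau_{a|x} - p_{a|x}\gamma}_1 = p\sum_{a,x}\norm{\sigma_{a|x} - p_{a|x}\gamma}_1\]
then shows that any sufficiently small $p > 0$ places $\bm\tau$ inside $\mathcal{B}(\bm\omega;\epsilon_*)$, hence inside ${\bf LHS}$, and hence inside ${\bf LHS}(\bm\sigma)$.

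The main subtlety is conceptual rather than computational: Lemma~\ref{Eq:gamma_in_relint} is a relative-interior statement with respect to the larger set ${\bf LHS}$, whereas the target conclusion concerns the restricted set ${\bf LHS}(\bm\sigma)$. The argument works precisely because Fact~\ref{fact:affineLHS} supplies the affine-hull membership required to apply the relint characterisation inside ${\bf LHS}$, while the classical-statistics identity ${\rm tr}(\tau_{a|x}) = p_{a|x}$ baked into the perturbation transports the conclusion back into ${\bf LHS}(\bm\sigma)$ for free. The degenerate case $\bm\sigma = \bm\omega$ is trivial, and the implicit assumption $p_{a|x} > 0$ (inherited from Lemma~\ref{Eq:gamma_in_relint}) is consistent with the way this fact is invoked in the proof of Result~\ref{Result:WorkExtraction}.
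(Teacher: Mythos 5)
Your proposal is correct, but it proves the statement by a genuinely different route than the paper. The paper's proof is constructive and works at the level of bipartite states: it purifies $\gamma$ into $\ket{\gamma}_{AB}=(\sqrt{d\gamma}\otimes\id_B)\ket{\Phi^+}_{AB}$, realises ${\bm\sigma}$ via a POVM $\{E_{a|x}\}_{a,x}$ on the $A$ side, takes ${\bm\omega}=\{p_{a|x}\id/d\}_{a,x}$ (not your $\{p_{a|x}\gamma\}_{a,x}$, though both lie in ${\bf LHS}({\bm\sigma})$), and shows that ${\bm\tau}^{(p)}=p{\bm\sigma}+(1-p){\bm\omega}$ is induced by the state $(\sqrt{d\gamma}\otimes\id_B)\bigl[p\proj{\Phi^+}+(1-p)\id/d^2\bigr](\sqrt{d\gamma}\otimes\id_B)$, which is separable for small $p$ because the isotropic state enters the separable ball and the local filter $\sqrt{\gamma}(\cdot)\sqrt{\gamma}$ cannot create entanglement; separability then forces ${\bm\tau}^{(p)}\in{\bf LHS}$. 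Your argument instead recycles the topological machinery already in the supplemental material (Lemma~\ref{Eq:gamma_in_relint} plus Fact~\ref{fact:affineLHS}) to place the same convex combination inside ${\bf LHS}$ via the relative-interior characterisation; this is shorter and avoids any appeal to the separability ball, and there is no circularity since Fact~\ref{fact:affineLHS} is proved independently. The one substantive difference is the hypothesis you correctly flag: Lemma~\ref{Eq:gamma_in_relint} requires $p_{a|x}>0$ for all $a,x$ (and genuinely needs it --- if some $p_{a|x}=0$ the point $\{p_{a|x}\gamma\}_{a,x}$ sits on the boundary relative to ${\rm aff}({\bf LHS})$), whereas the paper's separability argument proves the Fact exactly as stated, with no positivity assumption on the ${\rm tr}(\sigma_{a|x})$. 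So your proof establishes a slightly weaker version of the Fact; this is harmless for its only application (the Slater's-condition step in the proof of Result~\ref{Result:WorkExtraction}, where ${\rm tr}(\sigma_{a|x})>0$ is a standing assumption), but if you want the statement in full generality you would need either to dispose of the zero-trace components separately (noting $\sigma_{a|x}=0$ whenever $p_{a|x}=0$) or to adopt the paper's construction.
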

\begin{proof}
To start with, let $\{\ket{i}\}_{i=0}^{d-1}$ be the eigenbasis of $\gamma$ ($d$ is  the system dimension).
Then $\ket{\Phi^+}_{AB}\coloneqq\sum_{i=0}^{d-1}(1/\sqrt{d})\ket{ii}_{AB}$ is a maximally entangled state in a bipartite system $AB$ with local dimension $d$.
Now, we define
\begin{align}
\ket{\gamma}_{AB}\coloneqq(\sqrt{d\gamma}\otimes\id_B)\ket{\Phi^+}_{AB}.
\end{align} 
One can check that it is a pure state whose marginals in $A$ and $B$ are both $\gamma$; that is, ${\rm tr}_A(\proj{\gamma}_{AB}) = \gamma = {\rm tr}_B(\proj{\gamma}_{AB})$.
For the given state assemblage ${\bm\sigma}$, one can use a spatial steering scenario in $AB$ to realise it as follows (see, e.g., Ref.~\cite{Uola2015PRL}):
\begin{align}
\sigma_{a|x} = {\rm tr}_A\left[\proj{\gamma}_{AB}(E_{a|x}\otimes\id_B)\right]\quad\forall\,a,x,
\end{align}
where $\{E_{a|x}\}_{a,x}$ is a collection of positive operator-valued measures in system $A$.
Now, consider another state assemblage ${\bm\omega} = \{\omega_{a|x}\}_{a,x}$ defined by
\begin{align}
\omega_{a|x}\coloneqq{\rm tr}_A\left[\left(\gamma_A\otimes(\id_B/d)\right)(E_{a|x}\otimes\id_B)\right]\quad\forall\,a,x.
\end{align}
Then we have, for every $a,x$,
\begin{align}
{\rm tr}(\omega_{a|x}) = {\rm tr}(\gamma E_{a|x}) = {\rm tr}(\sigma_{a|x}).
\end{align}
Namely, ${\bm\omega}\in{\bf LHS}({\bm\sigma})$.
For every $0<p<1$, we write
\begin{align}
\tau_{a|x}^{(p)}\coloneqq p\sigma_{a|x} + (1-p)\omega_{a|x} = {\rm tr}_A\left[\eta_{AB}(p)(E_{a|x}\otimes\id_B)\right],
\end{align}
where we define
\begin{align}
&\eta_{AB}(p)\coloneqq p\proj{\gamma}_{AB} + (1-p)\gamma_A\otimes(\id_B/d)\nonumber\\
&=(\sqrt{d\gamma}\otimes\id_B)\left(p\proj{\Phi^+}_{AB} + (1-p)(\id_{AB}/d^2)\right)(\sqrt{d\gamma}\otimes\id_B).
\end{align}
Note that $(\cdot)\mapsto\sqrt{\gamma}(\cdot)\sqrt{\gamma}$ is an ${\rm LF_1}$ filter in $A$ (with success probability $1/d$), which {\em cannot} map a separable state to an entangled one.
Hence, by selecting a sufficiently small (while still positive) $p$, the state $p\proj{\Phi^+}_{AB} + (1-p)\id_{AB}/d^2$ will be separable, and hence $\eta_{AB}(p)$ will be separable, too.
Since separable states can only induce LHS state assemblages, we conclude that ${\bm\tau}^{(p)}\in{\bf LHS}$ for small enough $p$.
Finally, by construction, we also have ${\rm tr}(\tau_{a|x}^{(p)}) = {\rm tr}(\sigma_{a|x})$ for every $a,x$.
This means that, for some $0<p<1$ and ${\bm\tau}^{(p)},{\bm\omega}\in{\bf LHS}({\bm\sigma})$,
\begin{align}\label{Eq: aff LHS sigma computation 0001}
{\bm\sigma} = \frac{{\bm\tau}^{(p)} + (p-1){\bm\omega}}{p}.
\end{align}
Consequently, we have ${\bm\sigma}\in{\rm aff}({\bf LHS}({\bm\sigma}))$.
\end{proof}
Finally, as a corollary, we have 
\begin{align}\label{Eq:sigma lambda in aff LHS sigma}
{\bm\sigma}^{(\SRvar)}\in{\rm aff}({\bf LHS}(\bm\sigma))\quad\forall\,\SRvar\ge1,
\end{align}
where the state assemblage ${\bm\sigma}^{(\SRvar)}$ is defined in Eq.~\eqref{Eq:sigma^lambda_def}.
To see this, using Eq.~\eqref{Eq: aff LHS sigma computation 0001}, we can write
\begin{align}
{\bm\sigma}^{(\SRvar)} = \frac{1}{p\SRvar}{\bm\tau}^{(p)} + \frac{p-1}{p\SRvar}{\bm\omega} + \frac{\SRvar-1}{\SRvar}\{p_{a|x}\gamma\}_{a,x},
\end{align}
where ${\bm\tau}^{(p)},{\bm\omega},\{p_{a|x}\gamma\}_{a,x}$ are all in ${\bf LHS}({\bm\sigma})$ and $1/(p\SRvar) + (p-1)/(p\SRvar) + (\SRvar-1)/\SRvar=1$.
We then conclude the desired claim by using the definition of affine hull [see, e.g., Eq.~\eqref{Eq:affineLHS}].
\\

\subsection{\CYtwo{Supplemental Material IX: Work extraction task induced by a steering inequality}}

\CYtwo{
We detail the example given in Appendix C, which aims to illustrate how to certify ${\rm SR}_\gamma({\bm\sigma})>0$ via the work extraction advantage shown in Result~\ref{Result:WorkExtraction} in the main text. 
To start with, consider the single-qubit state assemblage ${\bm\sigma}^{\rm Pauli}$ defined by Eqs.~\eqref{Eq: sigma Pauli example} and~\eqref{Eq: sigma Pauli example 2} in Appendix C:
\begin{equation}
\begin{aligned}
&\sigma_{0|0}^{\rm Pauli}\coloneqq\proj{+}/2, ~~\sigma_{1|0}^{\rm Pauli}\coloneqq\proj{-}/2,\\
&\sigma_{0|1}^{\rm Pauli}\coloneqq\proj{0}/2, ~~\sigma_{1|1}^{\rm Pauli}\coloneqq\proj{1}/2,
\end{aligned}
\label{Eq_App_Pauli_Assem}
\end{equation}
where $\ket{\pm}\coloneqq(\ket{0}\pm\ket{1})/\sqrt{2}$.
${\bm\sigma}^{\rm Pauli}$ is prepared by applying projective measurements of Pauli $X,Z$ on the thermal state $\gamma = \id/2$. 
Here, Pauli $X,Z$ are observables defined by
\begin{align}
X \coloneqq \proj{+} - \proj{-}\quad\&\quad Z \coloneqq \proj{0} - \proj{1}.
\end{align}
In this case, we have $P(a,x) = {\rm tr}(\sigma_{a|x}^{\rm Pauli}) = 1/2$ for every $a,x=0,1$.
Hence, certifying ${\rm SR}_\gamma({\bm\sigma})>0$ via Result~\ref{Result:WorkExtraction} in the main text is equivalent to demonstrating the following strict inequality for some $H_{a|x}$'s [i.e., Eq.~\eqref{Eq:12} in the main text]:
\begin{align}\label{Eq:goal}
\max_{{\bm\tau}\in{\bf LHS}\left({\bm\sigma}^{\rm Pauli}\right)}\sum_{a,x}\frac{\Wdiff(\hat{\tau}_{a|x},H_{a|x})}{2}<\sum_{a,x}\frac{\Wdiff(\hat{\sigma}_{a|x}^{\rm Pauli},H_{a|x})}{2}.
\end{align}
Now, we seek experimentally feasible Hamiltonians $H_{a|x}$'s by considering {\em steering inequalities}, which are commonly used to certify steerability of state assemblages. There are many forms of steering inequalities, and we use the one shown in Ref.~\cite{Pusey2013} (originated from Ref.~\cite{Cavalcanti2009}):
\begin{align}\label{Eq:LHS bound}
\sum_{a,x}{\rm tr}(F_{a|x}\tau_{a|x})\leq \sqrt{2}\quad\forall\,{\bm\tau}\in{\bf LHS},
\end{align}
with
\begin{align}
F_{0|0}=-F_{1|0}=X\quad\&\quad F_{0|1}=-F_{1|1}=Z.
\end{align}
The value of $\sqrt{2}$ is the maximum that can be achieved by an LHS model. With the quantum strategy of Eq.~\eqref{Eq_App_Pauli_Assem}, we can reach the maximal quantum violation of the value $2$; namely,
\begin{align}\label{Eq:Q bound}
\sum_{a,x}{\rm tr}(F_{a|x}\sigma^{\rm Pauli}_{a|x}) = 2.
\end{align}
We thus define the following Hamiltonians for some parameters $0<\delta<\infty$ and temperature $0<T<\infty$:
\begin{align}\label{Eq: H def}
H_{a|x}^{\rm Pauli}\coloneqq k_BT\delta\times F_{a|x}\quad\forall\,a,x.
\end{align}
In other words, we have, for $a=0,1$,
\begin{align}
&H_{a|0}^{\rm Pauli}=(-1)^ak_BT\delta\times X;\\ &H_{a|1}^{\rm Pauli}=(-1)^ak_BT\delta\times Z,
\end{align}
which are identical to Eq.~\eqref{Eq:13} in Appendix C.
They are Hamiltonians induced by a steering inequality.
By selecting a suitable parameter $\delta$, they are realisable experimentally (see also the example in Appendix C).
Now, note that
\begin{align}
{\rm tr}\left(e^{-\frac{H_{a|x}^{\rm Pauli}}{k_BT}}\right) = e^\delta + e^{-\delta} = 2\cosh\delta\quad\forall\,a,x.
\end{align}
Using Eq.~\eqref{Eq:Wdiff formula}, we have, for every state $\rho$,
\begin{align}\label{Eq: Wdiff Pauli}
\Wdiff(\rho,H_{a|x}^{\rm Pauli}) &= {\rm tr}(H_{a|x}^{\rm Pauli}\rho) + k_BT\ln{\rm tr}\left(e^{-\frac{H_{a|x}^{\rm Pauli}}{k_BT}}\right) - k_BT\ln 2\nonumber\\
&= {\rm tr}(H_{a|x}^{\rm Pauli}\rho) +k_BT\ln\left(\cosh\delta\right).
\end{align}
Hence, by combining everything, we obtain
\begin{align}
&\max_{{\bm\tau}\in{\bf LHS}\left({\bm\sigma}^{\rm Pauli}\right)}\sum_{a,x}\frac{\Wdiff(\hat{\tau}_{a|x},H_{a|x}^{\rm Pauli})}{2}\nonumber\\
&=2k_BT\ln\left(\cosh\delta\right) + \max_{{\bm\tau}\in{\bf LHS}\left({\bm\sigma}^{\rm Pauli}\right)}\sum_{a,x}{\rm tr}(\tau_{a|x}H_{a|x}^{\rm Pauli})\nonumber\\
&\le 2k_BT\ln\left(\cosh\delta\right) + \max_{{\bm\tau}'\in{\bf LHS}}\sum_{a,x}{\rm tr}(\tau'_{a|x}H_{a|x}^{\rm Pauli})\nonumber\\
&\le k_BT \left[\sqrt{2}\delta + 2\ln\left(\cosh\delta\right)\right]\nonumber\\
&<2k_BT \left[\delta + \ln\left(\cosh\delta\right)\right]=\sum_{a,x}\frac{\Wdiff(\hat{\sigma}_{a|x}^{\rm Pauli},H_{a|x}^{\rm Pauli})}{2},
\end{align}
which is exactly what we aim at [i.e., Eq.~\eqref{Eq:goal}].
Here, the second line is due to Eq.~\eqref{Eq: Wdiff Pauli}, and the third line is due to a larger maximisation range. 
Using Eq.~\eqref{Eq: H def}, we obtain the fourth line via Eq.~\eqref{Eq:LHS bound} and the fifth line via Eq.~\eqref{Eq:Q bound}.
Also note that, for every $a,x$, we have $\hat{\sigma}_{a|x}^{\rm Pauli}/2 = \sigma_{a|x}^{\rm Pauli}$ and $\hat{\tau}_{a|x}/2 = \tau_{a|x}$ for ${\bm\tau}\in{\bf LHS}\left({\bm\sigma}^{\rm Pauli}\right)$.
We remark that the fourth and fifth lines give the classical bound [Eq.~\eqref{Eq:example classical bound}] and quantum bound [Eq.~\eqref{Eq:example quantum bound}] in Appendix C.

Finally, recall that we set $\gamma=\id/2$.
By using Eq.~\eqref{Eq:useful2} and the fact that $X,Z$ are traceless, we obtain (${\bf H}^{\rm Pauli}\coloneqq\{H_{a|x}^{\rm Pauli}\}_{a,x}$)
\begin{align}
&\WdiffSA({\bm\sigma}^{\rm Pauli},{\bf H}^{\rm Pauli}) = \sum_{a,x}\frac{{\rm tr}\left(H_{a|x}^{\rm Pauli}\sigma_{a|x}^{\rm Pauli}\right)}{2}\nonumber\\
& = \frac{k_BT\delta}{2}\times\sum_{a,x}{\rm tr}\left(F_{a|x}\sigma_{a|x}^{\rm Pauli}\right) = k_BT\delta,
\end{align}
where we have used Eqs.~\eqref{Eq:Q bound} and~\eqref{Eq: H def}.
Hence, we have $\WdiffSA({\bm\sigma}^{\rm Pauli},{\bf H}^{\rm Pauli})\ge k_BT\eta$ if and only if $\delta\ge\eta$.
}

\bibliography{Ref.bib}

\end{document}